\newtheorem{theorem}{Theorem}
\newtheorem{lemma}[theorem]{Lemma}
\newtheorem{definition}[theorem]{Definition}
\newcommand\img{\mathop{\mathrm{img}}}
\newcommand\coker{\mathop{\mathrm{coker}}}
\newcommand\into\hookrightarrow
\newcommand\onto\twoheadrightarrow
\newcommand\kk{\mathbb{k}}
\newcommand\X{\mathbb{X}}
\newcommand\dfn[1]{\textbf{#1}}
\begin{document}

\title{Persistence modules: Algebra and algorithms}

\author{Primoz Skraba}
\address{Jo\v zef Stefan Institute, Ljubljana, Slovenia}
\address{Artificial Intelligence Laboratory, Jo\v zef Stefan Institute, 
Jamova 39, 1000 Ljubljana, Slovenia}
\email{\url{primoz.skraba@ijs.si}}
\author{Mikael Vejdemo-Johansson}
\address{Corresponding author}
\address{Formerly: School of Computer Science, University of St Andrews, Scotland}
\address{Computer Vision and Active Perception Lab, KTH, Teknikringen 14, 100 44 Stockholm, Sweden}
\email{\url{mvj@kth.se}}

\subjclass[2010]{13P10, 55N35}

\begin{abstract}
  Persistent homology was shown by \textcite{cz2005} to be homology
  of graded chain complexes with coefficients in the graded ring
  $\kk[t]$. As such, the behavior of \emph{persistence modules} ---
  graded modules over $\kk[t]$ --- is an important part in the analysis
  and computation of persistent homology.

  In this paper we present a number of facts about persistence
  modules; ranging from the well-known but under-utilized to the
  reconstruction of techniques to work in a purely algebraic approach
  to persistent homology. In particular, the results we present give
  concrete algorithms to compute the persistent homology of a
  simplicial complex with torsion in the chain complex.
\end{abstract}

\maketitle

\vspace{-14pt}
\tableofcontents
\newpage

\section{Introduction}\label{sec:introduction}

The ideas of topological persistence
\autocite{edelsbrunner2000topological} and persistent homology
\autocite{cz2005} have had a fundamental impact on computational
geometry and the newly spawned field of applied topology. The
method cleverly modified tools from algebraic topology to make
them resistant to noise and applicable to input from
applications. It has found many uses, and produces globally
descriptive results often completely inaccessible with other
methods. Theoretical work in the field has served to successively
strengthen the weight and applicability of the methods. For
surveys or books on the subject, we recommend: \autocite
{zomorodian2005topology,carlsson2009topology,ghrist2008barcodes,eh-ct-09}.

Since the initial publication
\textcite{edelsbrunner2000topological}, persistent homology has
spawned many results in applied and computational topology. It
has been applied to manifold learning~\cite{NSW1,NSW2},
bioinformatics~\cite{bioinf,bioinf2}, computational
chemistry~\cite{chem}, medical~\cite{medical,med2}, trace
data~\cite{Walker}, among others (see also
\cite{image-other3,dgh-persistence,other1,other2,socg-pbsds-10}). See
~\cite{edelsbrunner2008persistent} for a complete survey of the development the field. 

The introduction of algebraic insights into the field has already
been productive: \textcite{cz2005} identified persistence modules
as graded modules over the polynomial ring $\kk[t]$ over a
coefficient field $\kk$, and used this insight to produce
advances in the computation of persistent homology. Even later,
\textcite{CdSM09} recognize that the tame representation theory
of $A_n$-quivers allow an analogous construction producing
zig-zag persistence.

Building on this backdrop, we introduce techniques from
computational commutative algebra into the study of persistent
homology -- maintaining explicit presentations of persistence
modules enables us to give purely algebraic approaches to
kernels, cokernels and images of morphisms between persistence
modules -- providing an alternative to the approach by
\textcite{cohen2009persistent}. In addition to this, it provides
algorithms for a variety of other algebraic constructions
including algorithms to compute persistent homology of chain
complexes with torsion present, and on that route an algebraic
approach to relative persistent homology and cohomology.

\subsection{Prior work}
\label{sec:prior-work}

At the core of the study of topological persistence lies two papers:
\textcite{edelsbrunner2000topological} who introduced the idea of
persistent homology in the first place, working exclusively with
coefficients in $\mathbb Z/2\mathbb Z$ and giving an ad hoc algorithm
for computing persistence barcodes.

This was followed by \textcite{cz2005}, who identified persistence
modules as, essentially, graded modules over the graded polynomial
ring $\kk[t]$, and described based on this identification how
algorithms follow where $\kk$ may be an arbitrary field.

The algebraic formalization approach, combined with results from
\textcite{gabriel1972unzerlegbare} produced fundamental results by
\textcite{CdSM09}, identifying quiver algebras as interesting
coefficient rings, producing a theory of zig-zag persistent homology
that allows the study of diagrams of topological spaces that no longer
form strict filtrations. As such, these form one way to approach the
question of how to handle torsion, or vanishing simplices, in a
persistent homology approach.

The usefulness of persistent homology relies fundamentally on
stability results -- for persistent homology generated from sublevel
filtrations of topological spaces, a small change in the filtration
function generates a quantifiably small change in the resulting
persistence modules. These types of results were introduced to the
field by \textcite{cohen2007stability} who require the underlying
space to be triangulable and the persistent homology modules to be
tame and degree-wise finite dimensional. Their results were
significantly improved by \textcite{chazal2009proximity}, who
formalize persistent homology as $\mathbb R$-indexed diagrams of
vector spaces, and define an \emph{interleaving distance} between such
diagrams to provide stability results that no longer require
continuous filtration functions, triangulable spaces, or tame
persistent homology modules. These stability results culminate in
recent work by \textcite{bubenik2012categorification}, who study the
category of $\mathbb  R$-indexed diagrams in arbitrary abelian
categories, and are able to prove a range of stability results on
sublevel filtrations using arbitrary filtration functions for
arbitrary functors from topological spaces to some abelian category.

Algebraic operations in persistent homology have also been considered
before. The kernel, cokernel and image constructions we introduce here
are algebraic reformulations of the techniques introduced in
\textcite{cohen2009persistent}. 

Torsion in chain complexes is tightly linked to relative homology, a
tool that has shown up numerous times in the
field. \textcite{cohen2009extending} provide a framework for
persistence in which by including relative homology groups of the
entire space relative to a superset filtration, all barcode intervals
are finite in length. This approach is refined by
\textcite{carlsson2009zigzag}, who are able to fit extended
persistence into a \emph{Mayer-Vietoris pyramid}, connecting various
constellations of relative persistent homologies into a large and
expressive system.
\textcite{bendich2007inferring}, \textcite{bendich2012local}, and \textcite{wang2011branching}
use persistent versions of local homology -- homology relative the
complement of a small neighborhood of a point -- to discover local
behavior close to that point.

\subsection{Our contributions}
\label{sec:our-contributions}

In this paper, we introduce techniques from computational
commutative algebra to the study of persistence modules. In
Section \ref{sec:persistence}, we review the correspondence
between persistence modules and graded $\kk[t]$-modules. Section
\ref{sec:ring_theory} reviews fundamental concepts from
computational commutative algebra. Of importance to the rest of
the paper is the adaptation of Smith normal form in Section
\ref{sec:graded-smith-normal}, which gives a graded version of
the Smith normal form and its computation that we have been
unable to find in the literature. Several of the results in
Section \ref{sec:ring_theory} are regularly discussed either for
simpler rings (fields, integers) or vastly more complex rings
(algebraic geometry) -- we discuss the special case of graded
$\mathbb k[t]$-modules and their applications to persistence. In
particular, the maximum possible free dimension in the category
of graded $\mathbb k[t]$-modules is discussed with its
implications for persistence modules. The goal of this paper is
twofold: to expand on the algebraic interpretation of persistence
and provide a uniform framework for algebraic constructions using
persistence modules

In Section \ref{sec:constructions}, we start elaborating on
standard techniques from computational algebra for representing
finitely presented modules, and draw from these to give specific
constructions for persistence modules. Section
\ref{sec:two_modules} discusses how a presentation map from
relations to generators captures the behavior of a persistence
module, and allows a graded Smith normal form computation to
recover a barcode from an arbitrary presentation. The sections
\ref{sec:direct-sum}, \ref{sec:image}, \ref{sec:cokernel},
\ref{sec:kernel}, \ref{sec:pullback}, and \ref{sec:pushout}, are
dedicated to purely algebraic constructions of kernels,
cokernels, images, pushouts, and pullbacks of persistence module
maps.

In Section~\ref{sec:applications}, we discuss applications of the
techniques and algorithms that we have described. In
Section~\ref{sec:pers-rel-homol}, we show how the nested module
presentation of a quotient module together with kernel and
cokernel algorithms allows us to compute persistent homology of a
chain complex with torsion, giving a purely algebraic approach to
relative homology.

In Section~\ref{sec:spectral-sequences}, we discuss the relations
between the work in this paper, and forthcoming work on computing
spectral sequences of persistence modules.

In Section~\ref{sec:unord-pers-comp} we demonstrate how the
fundamental viewpoint of graded modules over a graded ring provides us
a variation of the persistence algorithm that takes an \emph{unsorted}
stream of simplices as input, and changes the output on the fly should
new simplices indicate that the previous inferences were inaccurate.

\section{Persistent (Co-)Homology}\label{sec:persistence}

In persistent homology and cohomology, the basic object of study
is a filtered simplicial complex, with the filtration often
induced as a sublevel- or superlevel-filtration of some
(real-valued) function on a space. Since the filtration can be
considered a special kind of representation in simplicial
complexes of the total order of the function domain, where all
the endo-maps induced in the representation by the total order
represented are injective, it is clear that because of
functoriality, we can take point-wise homology (with coefficients
in $\kk$) on the representation and thus get a functor
\[
\text{SpCpx}^{\text{Domain}} \to \text{Mod}_\kk^{\text{Domain}}
\]
taking each filtered simplicial complex to a directed system
of $\kk$-modules, in which the map $H_*(X_\epsilon)\to
H_*(X_{\epsilon'})$ is induced by the inclusion $X_\epsilon\into
X_{\epsilon'}$ for $\epsilon<\epsilon'$.

For finite simplicial complexes, it is clear that at most
finitely many critical points $x_i$ can exist such that for each
$x_i$, and for all sufficiently small $\varepsilon$, the map
$X_{x_i-\varepsilon}\into X_{x_i+\varepsilon}$ are not
isomorphisms. Hence, we can change the entire representation from
the original range to a representation over a finite order
\[
-\infty < (x_0+x_1)/2 < (x_1+x_2)/2 < \dots < (x_{n-1}+x_n)/2 < \infty
\]

The same observation holds for a countable simplicial complex,
which corresponds to a countable order bounded by the top and
bottom $\pm\infty$.

Assuming that the order is left bounded, or in other words, that
for any point along the order there are only finitely many
populated ``compartments'' less than the current one, there are
clear isomorphisms between chain complexes for these
representations and graded modules over the ring
$\kk[t]$. Indeed, given left boundedness, we can write for the
center points stated above of the compartments $a_0=-\infty,
a_1=(x_0+x_1)/2, \dots$ enumerating the compartments as living
above $a_0, a_1, \dots$. We introduce a $\mathbb N$-grading on
the chain complex by making an element that lives over $a_k$ have
degree $k$. Finally, we introduce a $\kk[t]$-action by letting
\begin{enumerate}
\item $r\cdot c$ be given by the $\kk$-module structure on the chain complex $C_*X_{a_k}$ for $r\in\kk$ and $c\in C_*X_{a_k}$
\item $t\cdot c$ be given by the induced map $C_*X_{a_k}\to C_*X_{a_{k+1}}$
\end{enumerate}

This extends by $\kk[t]$-linearity to a $\kk[t]$-module structure for the entire representation $C_*X_*$.

At the core of the observations by \textcite{cz2005}, and also by \textcite{CdSM09} is that the computation of persistent homology is really the computation of homology within the category of modules over the appropriate quiver algebra -- $\kk[t]$ or even $\kk[t]/t^n$ if we have classical persistence, and $\kk Q$ for a quiver $Q$ of type $A_n$ for zig zag persistence. This observation underlies our approach to algebraic methods in persistent homology.

Indeed, once we can identify $C_*X_*$ with a $\kk[t]$-module by the methods above, the remainder of a persistent homology computation follows immediately. The persistence algorithm as described by \textcite{cz2005} is a Gaussian elimination algorithm as applied to graded $\kk[t]$-modules, and the entire computation follows from this recognition.

\section{Overview of Commutative Algebra}
\label{sec:commutative}

In this section, we will recall important and relevant facts from
commutative algebra; for some of these results we will be using the
internals of the proofs in later developments. The entirety of
Section~\ref{sec:ring_theory} can be skipped if you are already
comfortably familiar with algebra.

Commutative algebra is an area of study which deals with commutative
rings and their associated objects. With a field, the lessons
learned from commutative algebra reduce to classical linear
algebra -- and so, for much of what we will be dealing with, the
commutative algebra can well be considered to be a generalization
of familiar techniques from linear algebra into areas where not
all the strength of linear algebra is available. 

In particular, the construction of a basis for an arbitrary
finite dimensional vector space is a fundamental building block
in many algorithms. It, however, breaks down if the coefficients
do not lie in a field. The core step is where the leading
coefficient in some potential basis vector gets reduced to 1 by
dividing all coefficients with this leading coefficient. In
general commutative rings, there is no guarantee for having
multiplicative inverses, and therefore, this step can fail.

Instead, in commutative algebra, for a vector space like object
-- a module -- to have a basis is a very particular feature of
the module, and is enough to warrant a separate name for that
kind of module: a free module. As we shall see later on, a lot of
the power comes from phrasing our questions and our objects in
ways that retain free modules in the descriptions and solutions
as much as possible.

\subsection{Ring Theory}\label{sec:ring_theory}

A \dfn{ring} is some set of elements together with two binary
operations: addition (denoted by $+$) and multiplication (denoted
by $*$, $\cdot$ or juxtaposition). These are made to follow
reasonable axioms: addition should make the entire ring into an
abelian group. In particular, addition is commutative,
associative, and for each element $a$, it has an additive inverse
$-a$, as well as an additive identity element $0$ in the ring.

Multiplication also follows axioms: associativity, and
commutativity (hence \emph{commutative algebra}), and we shall
also always require a multiplicative identity element (denoted by
$1$).

Finally, the two operations are tied together by
\emph{distributivity}: $(a+b)\cdot c = a\cdot c + b\cdot c$.

The notable difference from the well-known definition of a field
is the lack a requirement of multiplicative inverses: indeed, if
each non-zero element $a$ has a multiplicative inverse $a^{-1}$
such that $a\cdot a^{-1}=1$, then the commutative ring is in fact
a field.

A ``vector space'' $M$ over a commutative ring $R$ is called a
\dfn{module} (or more accurately, a vector space is a module over
a field). More precisely, a module is an abelian group with a
specific binary operation $\cdot: R\times M\to M$ which obeys the
following axioms:
\begin{itemize}
\item $0\cdot m = 0$, $1\cdot m = 1$, 
\item $r\cdot (s\cdot m) =
(r\cdot s)\cdot m$; $\quad(r+s)\cdot m = r\cdot m + s\cdot m$, 
\item $r\cdot (m+n) = r\cdot m + r\cdot n$ for $r, s\in R$ and $m, n\in
M$.
\end{itemize}

We now state a few technical definitions followed by some informal comments. 

A \dfn{submodule} $N \leq M$ is a subgroup $N\subseteq M$ such that for any $r\in R$ and any $n\in N$, $r\cdot n\in N$. This is analogous to subspaces of vector spaces. The submodule is \dfn{trivial} if it only contains the $0$ element, and it is \dfn{proper} if it is a strict subset of $M$. Given a submodule $N\leq M$, the quotient group $M/N$ has the structure of an $R$-module by defining $r\cdot[m]=[r\cdot m]$. We call the resulting module the \dfn{quotient module} $M/N$. A \dfn{presentation} of a module $M$ is a presentation of $M$ as a quotient module $F/K$. The presentation is \dfn{free} if both $F$ and $K$ are free modules. The module $M$ is \dfn{finitely generated} if a presentation exists with $F$ free with a finite basis. The module $M$ is \dfn{finitely presented} if both $F$ and $K$ are free modules with finite bases.

A submodule $N < R$ of the ring itself is called an \dfn{ideal}.

A ring is \dfn{graded} over a monoid $(G,+)$ if the ring decomposes as a direct sum $R = \bigoplus_{g\in G} R_g$ where each $R_g$ is a subgroup of the additive group of $R$, and for elements $r\in R_g$ and $s\in R_h$, the product is in $R_{g+h}$. A module $M$ over a graded ring $R$ is graded if it decomposes into a direct sum of subgroups $M = \bigoplus_{g\in G} M_g$ such that the scalar product obeys the grading: $r\in R_g$ and $m\in M_h$ implies $r\cdot m\in M_{g+h}$. 

A ring is \dfn{noetherian} if any ascending chain of ideals $I_0 \subseteq I_1 \subseteq \dots$ eventually stabilizes. In other words, $R$ is noetherian if for every such chain of ideals there is some integer $N$ such that for all $n>N$, $I_n=I_{n+1}$. Very many rings we know are noetherian, in particular $\mathbb Z$, all fields, all principal ideal domains, and all polynomial rings over noetherian rings. Any ideal in a noetherian ring is finitely generated, and the partially ordered (under inclusion) set of any non-empty set of ideals of a noetherian ring has a maximal element.

Any ring has at least one proper ideal: the set $\{0\}$ is always an ideal, usually denoted by $0$ with some abuse of notation. If $0$ is the only proper ideal in a ring $R$, then that ring is a field. Indeed, suppose $r\neq 0$. Then, since $0$ is the only proper ideal, the principal ideal $(r)$ has to include the entire ring. In particular, $1\in (r)$, and thus there is some $s$ such that $rs=1$. Hence, $r$ has a multiplicative inverse.

\begin{theorem}[Decomposition of modules over PIDs]
  \label{thm:decomposition-module-PID}
  Suppose $R$ is a principal ideal domain. Then for every finitely generated module $M$ over $R$, there is an integer $n$ and a sequence of elements $r_1,\dots,r_m$ such that
  \[
  M = R^n \oplus \left( \bigoplus_{i=1}^m M/(r_i) \right)
  \]
\end{theorem}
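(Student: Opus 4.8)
The plan is to reduce the statement to a normal-form result for matrices over $R$ and then read off the decomposition one coordinate at a time.

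First I would present the module. Since $M$ is finitely generated there is a surjection $\pi\colon R^{a}\onto M$; set $K=\ker\pi$, so $M\cong R^{a}/K$. Because $R$ is a principal ideal domain it is noetherian, and a standard consequence of noetherianity is that the submodule $K$ of the finitely generated module $R^{a}$ is itself finitely generated. Writing a finite generating set of $K$ as the columns of a matrix exhibits $K$ as the image of a homomorphism $\phi\colon R^{b}\to R^{a}$, whence $M\cong\coker\phi$. The problem has now become: describe the cokernel of an arbitrary $a\times b$ matrix over $R$.

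Second, I would invoke the Smith normal form over a PID: there are invertible matrices $U\in\mathrm{GL}_{a}(R)$ and $V\in\mathrm{GL}_{b}(R)$ — i.e.\ changes of basis of $R^{a}$ and of $R^{b}$ — with $U\phi V=\diag(r_{1},\dots,r_{k},0,\dots,0)$ for suitable $r_{1},\dots,r_{k}\in R$. Left multiplication by $U$ is an automorphism of $R^{a}$ carrying $\img\phi$ onto $\img(U\phi V)$, so passing to $U\phi V$ does not change the cokernel up to isomorphism, and $M\cong\coker(U\phi V)$. A diagonal map splits over the coordinate axes, giving
\[
\coker(U\phi V)\;\cong\;\Bigl(\bigoplus_{i=1}^{k} R/(r_{i})\Bigr)\oplus R^{\,a-k},
\]
which is precisely the asserted form once one sets the free rank to $a-k$ and takes the listed torsion elements to be $r_{1},\dots,r_{k}$ (any $r_{i}$ which is a unit yields a zero summand and can be dropped). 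A reader preferring a module-theoretic route can instead use that submodules of free modules over a PID are free, together with a simultaneous (``stacked'') basis for $K\leq R^{a}$; this is essentially the same computation in different dress.

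Third, the real content lies in establishing the Smith normal form, and I expect its termination to be the main obstacle. The reduction runs by elementary and ``generalized'' row and column operations. The crucial ring-theoretic input is that in a PID any two elements $\alpha,\beta$ possess a greatest common divisor $\delta$ together with a Bézout identity $\delta=x\alpha+y\beta$, so that the invertible matrix $\left(\begin{smallmatrix}x&y\\-\beta/\delta&\alpha/\delta\end{smallmatrix}\right)$ (determinant $1$, entries honest elements of $R$) replaces a pair of entries $(\alpha,\beta)$ lying in one column of the matrix by $(\delta,0)$. Sweeping such operations along the first row and down the first column drives every off-pivot entry to $0$, leaving a pivot $r_{1}$ in the corner; one then recurses on the residual $(a-1)\times(b-1)$ block. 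The delicate point is that clearing a column can put nonzero entries back into an already-cleared row, so a single pass need not finish: termination is secured by a descent argument in which each such backtracking step strictly enlarges the ideal generated by the current pivot, and the ascending chain condition of the noetherian ring $R$ forbids an infinite chain of enlargements. (Imposing the extra divisibility $r_{1}\mid r_{2}\mid\dots\mid r_{k}$ costs only one further type of step and the same chain-condition argument, but is not needed for the statement at hand.) Everything outside this termination bookkeeping is formal.
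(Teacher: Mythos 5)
Your proof is correct and follows exactly the route the paper indicates: the paper gives no detailed argument, remarking only that the result ``is standard, and relies on the observation that matrices over a principal ideal domain admit a Smith normal form,'' which is precisely your strategy (present $M$ as the cokernel of a matrix, diagonalize by B\'ezout/gcd operations with termination via the ascending chain condition, and read the summands off the diagonal). The only discrepancy is notational: your summands $R/(r_i)$ are the standard ones, and the $M/(r_i)$ appearing in the paper's statement is evidently a typo for $R/(r_i)$.
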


The proof of this is standard, and relies on the observation that matrices over a principal ideal domain admit a Smith normal form.

\subsubsection{Euclidean Domains}\label{sec:euclidean_domains}

The following results and their proofs are well-known facts of commutative and computational algebra. We shall, however, reproduce some of them in order to be able to refer to their particular structure in this paper.

\begin{definition}
  Suppose $R$ is a ring. A \emph{Euclidean function} on $R$ is a function
  $R\to\mathbb N$ such that if $a, b\in R$ and $b\neq 0$, then there
  are $q, r\in R$ such that $a = b\cdot q+r$ and either $r=0$ or $\deg
  r < \deg b$. 

  An integral domain $R$ that supports at least one Euclidean function
  is called a \emph{Euclidean domain}.
\end{definition}

\begin{theorem}
  In a Euclidean domain, any two elements have a GCD.
\end{theorem}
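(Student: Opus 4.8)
The plan is to prove this by a variant of the classical Euclidean algorithm, using the Euclidean function to control termination. Fix $a, b \in R$; we want to exhibit a common divisor $d$ of $a$ and $b$ such that every common divisor of $a$ and $b$ also divides $d$. The degenerate cases come first: if both $a$ and $b$ are zero, then $0$ itself serves as a GCD (every element divides $0$). If exactly one of them, say $b$, is zero, then $a$ is a GCD. So assume $a, b \neq 0$.

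First I would run the Euclidean algorithm: set $r_0 = a$, $r_1 = b$, and as long as $r_{i} \neq 0$, use the Euclidean function to write $r_{i-1} = r_i q_i + r_{i+1}$ with either $r_{i+1} = 0$ or $\deg r_{i+1} < \deg r_i$. The sequence of degrees $\deg r_1 > \deg r_2 > \cdots$ is a strictly decreasing sequence of natural numbers, hence must terminate; so there is some index $n$ with $r_{n+1} = 0$ and $r_n \neq 0$. I claim $d := r_n$ is a GCD of $a$ and $b$. The verification has two halves. For the first half, one shows by downward induction on $i$ that $r_n \mid r_i$ for every $i$: it holds trivially for $i = n$ and $i = n+1$ (using $r_{n+1} = 0$), and the relation $r_{i-1} = r_i q_i + r_{i+1}$ propagates divisibility from $r_i, r_{i+1}$ back to $r_{i-1}$. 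In particular $r_n \mid r_0 = a$ and $r_n \mid r_1 = b$, so $d$ is a common divisor. For the second half, let $c$ be any common divisor of $a$ and $b$; then $c \mid r_0$ and $c \mid r_1$, and reading the same relation in the forward direction, $r_{i+1} = r_{i-1} - r_i q_i$ shows inductively that $c \mid r_i$ for all $i$, hence $c \mid r_n = d$. This establishes that $d$ is a GCD.

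The only real subtlety — and the step I expect to require the most care — is guaranteeing that the Euclidean division in each step is legitimate, i.e.\ that we are always dividing by a nonzero element when we invoke the definition of a Euclidean function. This is handled by the loop condition: we only perform the division $r_{i-1} = r_i q_i + r_{i+1}$ when $r_i \neq 0$, which is exactly the hypothesis the Euclidean function requires. Termination is then immediate from well-ordering of $\mathbb{N}$ applied to the strictly decreasing degree sequence $\deg r_1 > \deg r_2 > \cdots$. Everything else is routine bookkeeping with the divisibility relation, and in particular the argument never uses commutativity beyond what is already built into the ambient assumptions, nor does it need unique factorization — only the division-with-remainder property.
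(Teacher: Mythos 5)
Your proof is correct and follows essentially the same route as the paper, which simply invokes the (extended) Euclidean algorithm enabled by the Euclidean function. The only difference is one of detail: you verify the GCD property of the last nonzero remainder directly via divisibility, while the paper's one-line proof additionally notes that the extended algorithm yields B\'ezout coefficients $a\cdot r+b\cdot s=g$, which the statement itself does not require.
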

\begin{proof}
  The extended Euclidean algorithm relies precisely on the existence
  of a Euclidean function, and generates, for elements $r, s$ 
  elements $a,b,g$ such that $g=\gcd(r,s)$ and $a\cdot r+b\cdot s=g$.
\end{proof}

\begin{theorem}
  All Euclidean domains are PIDs.
\end{theorem}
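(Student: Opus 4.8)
The plan is to take an arbitrary ideal of $R$ and produce a single generator for it. Let $\deg$ denote a Euclidean function witnessing that $R$ is a Euclidean domain, and let $I$ be an ideal of $R$. If $I$ is the trivial ideal there is nothing to do, so assume $I$ contains a nonzero element. The set $\{\deg b : b \in I,\ b \neq 0\}$ is then a nonempty subset of $\mathbb N$, hence has a least element since $\mathbb N$ is well-ordered; fix a nonzero $b \in I$ realizing this minimum.

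I claim $I = (b)$. The inclusion $(b) \subseteq I$ is immediate, since $b \in I$ and $I$ is an ideal. For the reverse inclusion, let $a \in I$ be arbitrary. Applying the defining property of the Euclidean function to $a$ and $b$ (valid because $b \neq 0$) yields $q, r \in R$ with $a = bq + r$ and either $r = 0$ or $\deg r < \deg b$. Now $r = a - bq$ is a difference of two elements of $I$, so $r \in I$; but then $r \neq 0$ would force $\deg r < \deg b$, contradicting the minimality in our choice of $b$. Hence $r = 0$, so $a = bq \in (b)$, and therefore $I \subseteq (b)$.

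The argument is short, and the only step that genuinely needs care --- rather than being a real obstacle --- is the selection of $b$: it relies on the Euclidean function being $\mathbb N$-valued (or at least valued in a well-ordered set), which is precisely what makes ``minimal-degree element of $I$'' a meaningful notion. It is also worth noting that the integral domain hypothesis built into the definition of a Euclidean domain plays no role in this particular implication; any ring carrying a Euclidean function has all of its ideals principal by the argument above.
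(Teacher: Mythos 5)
Your proof is correct and follows essentially the same route as the paper: pick a nonzero element of minimal degree in the ideal, then use Euclidean division and minimality to show every element is a multiple of it. Your handling of the zero ideal and the restriction to nonzero elements when minimizing the degree is a slight tidying of the paper's argument, but the approach is the same.
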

\begin{proof}
  Suppose $I\subseteq R$ is an ideal in a Euclidean domain. By
  well-order of $\mathbb N$, there is some minimal value in the set
  $\{\deg r | r\in I\}$, say $n$. Again, by the well-ordering of
  $\mathbb N$, this value is attained -- so there is some $b\in I$
  such that $\deg b = n$.

  We claim that $I$ is generated by $b$. Indeed, suppose that $a\in I$
  is some other element of $I$. Since $\deg a\geq\deg b$ by the
  minimality of $\deg b$, there is some $q, r$ such that $a = b\cdot q
  + r$ and either $r=0$ or $\deg r < \deg b$. Since $r=a-b\cdot q$,
  and both $a, b\in I$, it follows that $r\in I$.

  Hence, if $r\neq 0$, then $\deg r < \deg b$ contradicting the
  minimality of degree in our choice of $b$. Hence, $r=0$ and thus $a
  = b\cdot q$, so $a\in\langle b\rangle$.

  The result follows immediately.
\end{proof}

\subsubsection{Connection with Persistence}
\label{sec:conn_persistence}

\textcite{cz2005} identify the \emph{persistence modules} 
with graded modules over $\kk[t]$. We remind the reader of the
intuition of this representation. Recall that persistent homology
computes the homology over a nested sequence of spaces called a filtration:
\begin{eqnarray*}
\emptyset  = \X_0 \subseteq\X_1 \subseteq \X_2 \subseteq\ldots \subseteq\X_{N-1} \subseteq \X_{N} =\X 
\end{eqnarray*}
The grading comes from viewing each of the spaces in the sequence
as a slice and stacking the slices up as in Figure~\ref{fig:zomorodian-example-graded}.

\begin{figure}
  \centering
  \begin{tikzpicture}[thick]
    \begin{scope}[xshift=-1.7cm]
      \node [coordinate,label=above:a] (a) at (-0.2,1.65) {};
      \node [coordinate,label=above:b] (b) at (0.65,1.09) {};
      \fill [blue] (a) circle (2pt);
      \fill [blue] (b) circle (2pt);
      \draw [very thick,gray] (-0.5,-0.5) -- (1,-1.5);
      \draw [very thick,gray] (1,-1.5) -- (1,1.5);
      \draw [very thick,gray] (1,1.5) -- (-0.5,2.5);
      \draw [very thick,gray] (-0.5,-0.5) -- (-0.5,2.5);
    \end{scope}
    \begin{scope}[xshift=0cm]
      \node [coordinate,label=above:a] (a) at (-0.2,1.65) {};
      \node [coordinate,label=above:b] (b) at (0.65,1.09) {};
      \node [coordinate,label=below:d] (d) at (-0.2,0) {};
      \node [coordinate,label=below:c] (c) at (0.65,-0.6) {};
      \fill (a) circle (2pt);
      \fill (b) circle (2pt);
      \fill [blue] (c) circle (2pt);
      \fill [blue] (d) circle (2pt);   
      \draw [blue] (a) -- (b);
      \draw [blue] (c) -- (b);    
      \draw [very thick,gray] (-0.5,-0.5) -- (1,-1.5);
      \draw [very thick,gray] (1,-1.5) -- (1,1.5);
      \draw [very thick,gray] (1,1.5) -- (-0.5,2.5);
      \draw [very thick,gray] (-0.5,-0.5) -- (-0.5,2.5);
    \end{scope}
    \begin{scope}[xshift=1.7cm]
      \node [coordinate,label=above:a] (a) at (-0.2,1.65) {};
      \node [coordinate,label=above:b] (b) at (0.65,1.09) {};
      \node [coordinate,label=below:d] (d) at (-0.2,0) {};
      \node [coordinate,label=below:c] (c) at (0.65,-0.6) {};
      \fill (a) circle (2pt);
      \fill (b) circle (2pt);
      \fill (c) circle (2pt);
      \fill (d) circle (2pt);   
      \draw (a) -- (b);
      \draw (c) -- (b);    
      \draw [blue] (a) -- (d);
      \draw [blue] (c) -- (d);    
      \draw [very thick,gray] (-0.5,-0.5) -- (1,-1.5);
      \draw [very thick,gray] (1,-1.5) -- (1,1.5);
      \draw [very thick,gray] (1,1.5) -- (-0.5,2.5);
      \draw [very thick,gray] (-0.5,-0.5) -- (-0.5,2.5);
    \end{scope}
    \begin{scope}[xshift=3.4cm]
      \node [coordinate,label=above:a] (a) at (-0.2,1.65) {};
      \node [coordinate,label=above:b] (b) at (0.65,1.09) {};
      \node [coordinate,label=below:d] (d) at (-0.2,0) {};
      \node [coordinate,label=below:c] (c) at (0.65,-0.6) {};
      \fill (a) circle (2pt);
      \fill (b) circle (2pt);
      \fill (c) circle (2pt);
      \fill (d) circle (2pt);   
      \draw (a) -- (b);
      \draw (c) -- (b);     
      \draw  (a) -- (d);
      \draw  (c) -- (d);    
      \draw [blue] (a) -- (c);
      \draw [very thick,gray] (-0.5,-0.5) -- (1,-1.5);
      \draw [very thick,gray] (1,-1.5) -- (1,1.5);
      \draw [very thick,gray] (1,1.5) -- (-0.5,2.5);
      \draw [very thick,gray] (-0.5,-0.5) -- (-0.5,2.5);
    \end{scope}
    \begin{scope}[xshift=5.1cm]
      \node [coordinate,label=above:a] (a) at (-0.2,1.65) {};
      \node [coordinate,label=above:b] (b) at (0.65,1.09) {};
      \node [coordinate,label=below:d] (d) at (-0.2,0) {};
      \node [coordinate,label=below:c] (c) at (0.65,-0.6) {};
      \fill [blue!50] (a) -- (b) -- (c) -- cycle;
      \fill (a) circle (2pt);
      \fill (b) circle (2pt);
      \fill (c) circle (2pt);
      \fill (d) circle (2pt);   
      \draw (a) -- (b);
      \draw (c) -- (b);    
      \draw  (a) -- (d);
      \draw  (c) -- (d);    
      \draw  (a) -- (c); 
      \draw [very thick,gray] (-0.5,-0.5) -- (1,-1.5);
      \draw [very thick,gray] (1,-1.5) -- (1,1.5);
      \draw [very thick,gray] (1,1.5) -- (-0.5,2.5);
      \draw [very thick,gray] (-0.5,-0.5) -- (-0.5,2.5);
    \end{scope}
    \begin{scope}[xshift=6.8cm]
      \node [coordinate,label=above:a] (a) at (-0.2,1.65) {};
      \node [coordinate,label=above:b] (b) at (0.65,1.09) {};
      \node [coordinate,label=below:d] (d) at (-0.2,0) {};
      \node [coordinate,label=below:c] (c) at (0.65,-0.6) {};
      \fill [black!50] (a) -- (b) -- (c) -- cycle;
      \fill [blue!50] (a) -- (c) -- (d) -- cycle;
      \fill (a) circle (2pt);
      \fill (b) circle (2pt);
      \fill (c) circle (2pt);
      \fill (d) circle (2pt);   
      \draw (a) -- (b);
      \draw (c) -- (b);    
      \draw  (a) -- (d);
      \draw  (c) -- (d);    
      \draw  (a) -- (c); 
      \draw [very thick,gray] (-0.5,-0.5) -- (1,-1.5);
      \draw [very thick,gray] (1,-1.5) -- (1,1.5);
      \draw [very thick,gray] (1,1.5) -- (-0.5,2.5);
      \draw [very thick,gray] (-0.5,-0.5) -- (-0.5,2.5);
    \end{scope}
  \begin{scope}[xshift=9cm]
      \node [coordinate,label=above:$t^5$] (a) at (0.1,1.4) {};
      \node [coordinate,label=above:$t^5$] (b) at (1.9,1.4) {};
      \node [coordinate,label=below:$t^4$] (d) at (0.1,-0.4) {};
      \node [coordinate,label=below:$t^4$] (c) at (1.9,-0.4) {};
      \fill [lightgray!40] (a) -- (b) -- (c) -- cycle;
      \fill [lightgray!40] (a) -- (c) -- (d) -- cycle;
      \fill (a) circle (2pt);
      \fill (b) circle (2pt);
      \fill (c) circle (2pt);
      \fill (d) circle (2pt);   
      \draw (a) -- (b) node [midway, above] {$t^4$} ;
      \draw (c) -- (b) node [midway, right] {$t^4$};
      \draw  (a) -- (d) node [midway, left] {$t^3$};
      \draw  (c) -- (d) node [midway, below] {$t^3$};
      \draw  (a) -- (c) node [midway] {$$};
      \node [coordinate,label=$t$] (t1) at (1.5,0.7){};
      \node [coordinate,label=$1$] (t1) at (0.5,-0.2){};
      \draw [very thick,gray] (-0.5,-1) -- (2.5,-1);
      \draw [very thick,gray] (2.5,-1) -- (2.5,2);
      \draw [very thick,gray] (2.5,2) -- (-0.5,2);
      \draw [very thick,gray] (-0.5,-1) -- (-0.5,2);
      \fill [lightgray!40] (0.85,0.4) rectangle (1.1,0.9);          
      \path  (a) -- (c) node [midway] {$t^2$};
    \end{scope}
\draw[very thick,->]  (-2.3,-2) -- (10,-2);
  \end{tikzpicture}
  
\caption{Persistent homology example by \textcite{edelsbrunner2000topological}. The arrow denotes the direction of time in the filtration with the graded complex shown at the far right. }
  \label{fig:zomorodian-example-graded}
\end{figure}

This entire structure is then encoded in the final complex (top
slice), with each simplex annotated by a number indicating how
many slices ago it first appeared. This is precisely the
information encoded in the $\kk[t]$ module structure.

Furthermore, the free and torsion parts of the
decomposition in Theorem \ref{thm:decomposition-module-PID}
correspond to infinite and finite intervals in the persistent
barcodes respectively. With these results in hand, we can state the
following results. 

\begin{theorem}
  The graded ring $\kk[t]$ is a Euclidean domain.
\end{theorem}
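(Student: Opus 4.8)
The plan is to exhibit the ordinary polynomial degree as a Euclidean function on $\kk[t]$ and to check that $\kk[t]$ is an integral domain, so that it qualifies as a Euclidean domain in the sense of the definition above. The grading on $\kk[t]$ by $t$-degree plays no role in this particular property, so I would treat $\kk[t]$ simply as a commutative ring. First I would verify that $\kk[t]$ has no zero divisors: if $f$ and $g$ are nonzero with leading terms $\alpha t^m$ and $\beta t^n$ (so $\alpha,\beta\neq 0$ in $\kk$), then the coefficient of $t^{m+n}$ in $fg$ is $\alpha\beta$, which is nonzero because $\kk$, being a field, has no zero divisors; hence $fg\neq 0$. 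Together with the (routine) verification that $\kk[t]$ is a commutative ring with $1$, this shows $\kk[t]$ is an integral domain.

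Next I would take the Euclidean function to be $\deg\colon\kk[t]\to\mathbb N$ sending a nonzero polynomial to its degree; since the Euclidean-function axiom above only constrains behaviour relative to a nonzero divisor $b$, no value need be assigned to $\deg 0$. The content of the claim is then the familiar polynomial long-division statement: given $a,b\in\kk[t]$ with $b\neq 0$, there exist $q,r$ with $a = bq + r$ and either $r=0$ or $\deg r < \deg b$. I would prove this by induction on $\deg a$, with $a=0$ as the base case (take $q=r=0$). If $\deg a < \deg b$, take $q=0$, $r=a$. Otherwise let $m=\deg a$, $n=\deg b$ with leading coefficients $\alpha,\beta$; this is the one place where the field hypothesis is used, since we need $\beta^{-1}\in\kk$. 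The polynomial $a' = a - \alpha\beta^{-1}t^{m-n}b$ has degree strictly less than $m$, so by the inductive hypothesis $a' = bq' + r$ with $r=0$ or $\deg r < \deg b$, whence $a = b\bigl(q' + \alpha\beta^{-1}t^{m-n}\bigr) + r$ is the required decomposition.

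There is no real obstacle: the statement is entirely standard, and the only point worth flagging is that invertibility of leading coefficients — i.e. the hypothesis that $\kk$ is a field and not merely an integral domain — is precisely what makes the division step succeed; over a general coefficient ring one obtains only a pseudo-division with a power of $\beta$ clearing the denominator. Once $\deg$ is seen to be a Euclidean function, $\kk[t]$ satisfies the definition of a Euclidean domain, and by the earlier theorems of this section it is in particular a principal ideal domain, so Theorem~\ref{thm:decomposition-module-PID} applies to it.
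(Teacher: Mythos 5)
Your proposal is correct and follows the same route as the paper, which simply takes $\deg\sum a_it^i=\max\{i\mid a_i\neq 0\}$ as the Euclidean function and appeals to the classical division algorithm; you additionally spell out the standard details (no zero divisors, induction on degree, invertibility of the leading coefficient) that the paper leaves implicit.
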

\begin{proof}
  As a Euclidean function, take $\deg\sum a_it^i = \max\{i | a_i\neq 0\}$. We notice that this coincides with the classical notion of degree of a polynomial, and with the grading of $\kk[t]$.
\end{proof}

\begin{theorem}
  Any finitely generated submodule $M$ of a free module $F$ is itself
  free.
\end{theorem}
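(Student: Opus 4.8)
The plan is to reduce to the case where $F$ has finite rank and then to induct on that rank. For the reduction: since $M$ is finitely generated, each of its finitely many generators involves only finitely many of the basis vectors of $F$, so $M$ sits inside the free submodule $F_0\le F$ spanned by that finite collection; replacing $F$ by $F_0$, we may assume $F$ is free of finite rank $n$. (An alternative, shorter route exists: $\kk[t]$ is a domain, so $F$ and hence $M$ is torsion-free, and applying Theorem~\ref{thm:decomposition-module-PID} to the finitely generated module $M$ expresses it as a free module together with a direct sum of cyclic modules $\kk[t]/(r_i)$; torsion-freeness kills the torsion summands, leaving $M$ free. I would nonetheless give the self-contained induction below, since it avoids invoking the structure theorem.)

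For the induction I would prove the slightly stronger claim that \emph{every} submodule $M$ of a free $\kk[t]$-module of finite rank $n$ is free of rank at most $n$; dropping the finite-generation hypothesis is precisely what makes the recursion close up. The case $n=0$ is trivial. For $n\ge 1$, fix a basis $e_1,\dots,e_n$ of $F$, let $\pi\colon F\to\kk[t]$ be projection onto the $e_n$-coordinate, and put $F'=\ker\pi=\kk[t]e_1\oplus\dots\oplus\kk[t]e_{n-1}$, free of rank $n-1$. Then $N:=M\cap F'$ is a submodule of $F'$, hence free of rank at most $n-1$ by the inductive hypothesis, while $\pi(M)$ is an ideal of $\kk[t]$ and therefore principal, say $\pi(M)=(r)$.

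It then remains to split the short exact sequence $0\to N\to M\xrightarrow{\ \pi\ }(r)\to 0$. If $r=0$ then $M=N$ and we are done. If $r\ne 0$, then multiplication by $r$ is injective on the domain $\kk[t]$, so $(r)\cong\kk[t]$ is free of rank one; choosing $m_0\in M$ with $\pi(m_0)=r$ yields a section $a\mapsto am_0$, and hence $M=N\oplus\kk[t]m_0$ is free of rank at most $n$. (Concretely, for $m\in M$ write $\pi(m)=ar$; then $m-am_0\in\ker\pi\cap M=N$, and $am_0\in N$ forces $ar=0$, so $a=0$.) The one place to be careful is the bookkeeping of the induction: it is tempting to carry ``finitely generated'' through the recursion, which then forces a detour to check that $M\cap F'$ is again finitely generated (true by noetherianity of $\kk[t]$, but unnecessary here); proving the unrestricted statement for finite-rank free modules sidesteps this. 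Beyond that the argument is routine — the genuine inputs are only that $\kk[t]$ is a PID, used in the rank-one base case, and that free modules are projective, used to split the sequence. The same proof applies verbatim in the graded category, taking $e_1,\dots,e_n$ to be a homogeneous basis so that $\pi(M)$ is a graded ideal.
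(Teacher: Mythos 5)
Your proof is correct, but it follows a genuinely different route from the paper's. The paper argues by generator elimination inside the graded setting: given a purported $\kk[t]$-linear relation $\sum_i a_i t^{e_i} m_i = 0$ among the finitely many generators, it divides out powers of $t$ (legitimate because $F$ is free, hence $t$-torsion-free) until some coefficient lies in $\kk$ and is therefore invertible, at which point that generator is redundant and can be discarded; since there are finitely many generators the process terminates with a relation-free, hence free, generating set drawn from the original one. You instead give the classical PID argument: reduce to an ambient free module of finite rank $n$, then induct on $n$ by projecting onto the last coordinate, using that $\pi(M)$ is a principal ideal and splitting $0\to M\cap F'\to M\to (r)\to 0$ because $(r)$ is free when $r\neq 0$. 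Your version buys generality and robustness --- it works over any PID, needs no homogeneity or monomial-coefficient assumption on relations (the paper's relation is implicitly homogeneous, which is where the graded structure of $\kk[t]$ enters), it proves the stronger statement for arbitrary submodules of finite-rank free modules, and it yields the rank bound $\operatorname{rank} M\le n$. The paper's version buys concreteness: it stays entirely inside the given generating set, produces a free basis as a subset of the original generators, and foreshadows the reduction/elimination manipulations used later in the graded Smith normal form algorithm. Both the structure-theorem shortcut you mention and the inductive argument are sound; your handling of the splitting ($m-am_0\in N$, and $am_0\in N$ forces $a=0$) and of the reduction to finite rank are correct, and the remark that the argument transfers verbatim to the graded category (homogeneous basis, graded principal ideal $(t^d)$ up to shift) is accurate.
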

\begin{proof}
  Suppose $M\subseteq F$ is generated by elements
  $m_1,\dots,m_k$. Suppose that $M$ is not free. Then there is some
  $\kk[t]$-linear relation 
  \[
  \sum_i a_i t^{e_i} m_i = 0
  \]
  between the generators. If $e_i>0$ whenever $a_i\neq 0$, then the
  entire expression is a multiple of $t$. Since $F$ is free, if
  $t\cdot m=0$ then $m=0$ for all elements $m\in F$. In particular,
  this holds for $t\cdot\sum_i a_it^{e_i-1}m_i$. Thus, we reduce the
  exponents in this expression by some integer $d$ until for some $i$,
  $a_i\neq 0$ and $e_i=0$. Thus, this relation is equivalent to the
  relation
  \[
  a_im_i = -\sum_{j\neq i} a_jt^{e_j-d} m_j
  \]
  
  But this implies that $m_i$ is in the span of $m_1,\dots,\hat
  m_i,\dots,m_k$, where we retain the convention that the $\hat m_i$ means that we exclude
  $m_i$ from the list. Thus, the same module $M$ is also generated by
  the remaining generators.

  Since there are finitely many generators, this process of
  elimination eventually terminates. When it terminates, it must be
  because there are no more non-zero $\kk[t]$-linear relations to be
  found.

  Hence, if $M$ is a finitely generated submodule of $F$, it must be
  free.
\end{proof}

\begin{theorem}
  Any finite presentation $K\to G\to M\to 0$ of a graded module $M$
  over $\kk[t]$ is a free presentation $0\to K\to G\to M\to 0$.
\end{theorem}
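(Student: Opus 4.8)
The plan is to unwind the definitions in play. A presentation of $M$ exhibits it as a quotient $G/K$ with $G$ free; the word ``finite'' in \emph{finite presentation} asks $G$ to have a finite basis, and $K=\ker(G\onto M)$ is the submodule of relations, included into $G$. What must be shown is that $K$ is itself free of finite rank; once that is in hand, $0\to K\into G\onto M\to 0$ is a short exact sequence of finite free modules, which is exactly what ``free presentation'' means, and in particular the relations map $K\to G$ is injective.

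First I would note that $\kk[t]$ is noetherian: it is a polynomial ring in one variable over a field, and polynomial rings over noetherian rings were noted above to be noetherian. Since $G$ is finitely generated over the noetherian ring $\kk[t]$, every submodule of $G$ is finitely generated; applying this to $K=\ker(G\onto M)$ shows $K$ is finitely generated. (If instead one is handed a presentation in which an abstract finite free module maps onto the relations, one may harmlessly replace it by $\ker(G\onto M)$ itself without altering $M$ or $G$.)

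Next I would invoke the preceding theorem: $K$ is a finitely generated submodule of the free $\kk[t]$-module $G$, hence $K$ is free. A finitely generated free $\kk[t]$-module has a finite basis, so $K$ is free of finite rank, and the inclusion $K\into G$ has image exactly $\ker(G\onto M)$. Therefore
\[
0\to K\into G\onto M\to 0
\]
is exact with $K$ and $G$ both free of finite rank, i.e.\ the finite presentation is a free presentation.

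I do not expect a real obstacle, because the substance is already contained in the preceding theorem; this statement is essentially its corollary once noetherianness supplies finite generation of $K$. The one genuine point of care is the grading: the kernel of a degree-preserving homomorphism of graded $\kk[t]$-modules is a graded submodule, so I must use the graded form of ``a finitely generated submodule of a free module is free'' — a finitely generated graded submodule of a graded free module being a finite direct sum of shifts $\bigoplus_i \kk[t](-d_i)$ — but the argument given for that theorem manipulates only homogeneous relations $\sum_i a_i t^{e_i} m_i = 0$ and so carries over to the graded setting verbatim. Equivalently, the only nontrivial implication being asserted is that finiteness on the generator side forces finiteness (hence freeness) on the relation side, and that is precisely where noetherianness of $\kk[t]$ enters.
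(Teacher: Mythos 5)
Your proof is correct, but it proves the theorem along a different axis than the paper does. You read the presentation as $M=G/K$ with $K=\ker(G\onto M)$ an actual submodule of $G$ (which matches the paper's formal definition of a presentation as a quotient), so injectivity of $K\to G$ is automatic and the whole content becomes freeness of $K$; you obtain that from noetherianness of $\kk[t]$ — finite generation of submodules of the finitely generated module $G$, a module-level strengthening of the ideal statement the paper records — followed by the preceding theorem that a finitely generated submodule of a free module is free, together with a correct remark that the kernel is a graded submodule and the earlier argument is already homogeneous. The paper instead takes $K$ to be a free module on generators of $\ker p$, so freeness is built in by construction, and locates all the content in injectivity of $i\colon K\to G$, which it verifies degree by degree: in each degree $d$ the restriction $K_d\to G_d\to M_d\to 0$ is a presentation of $\kk$-vector spaces, and $i_d$ is (after the paper's ``WLOG'') the inclusion of $\ker p_d$, hence injective, so no nonzero homogeneous element can die under $i$. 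The two routes buy different things: yours needs no minimality assumption on the relation module and rests only on facts already established on the page, whereas the paper's degree-wise argument avoids noetherian machinery but tacitly assumes the free generators of $K$ match $\ker p$ degree-wise — exactly the assumption your ``replace $K$ by $\ker(G\onto M)$'' step makes explicit; indeed, for an arbitrary finite free $K$ merely surjecting onto the relations (two basis vectors hitting the same relation, say) the map $K\to G$ fails to be injective, so your reading is the one under which the statement is literally true.
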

\begin{proof}
  Suppose $G$ is a free module of generators of $M$, and
  $G\xrightarrow{p}M\to 0$ is the map of basis element to
  generator. $K$, the module of relations of the presentation is by
  force the free module of generators of $\ker p$. 

  Claim: The corresponding map $K\xrightarrow{i}G$ is an injective
  map.

  Indeed, suppose $i$ is not injective as a degree 0 map of graded
  $\kk[t]$-modules. Then there is some $k\neq 0$ such that $ik=0$. If
  $k$ is not homogeneous, then $k=\sum_d k_d$, and there is some
  non-zero homogeneous $k_d$ such that $ik_d=0$.

  Consider the map restricted to the degree $d$ part of all modules
  involved: $K_d\xrightarrow{i_d}G_d\xrightarrow{p_d}M_d\to 0$. Since
  everything is graded, this restriction of a presentation of $M$ is a
  finite presentation of $M_d$ as a $\kk$-module, or in other words as
  a vector space. Since all vector spaces are free modules, it follows
  that $\ker p_d$ is a free subspace of $G_d$, and thus WLOG, $\ker
  p_d = K_d$ with $i_d$ the inclusion map of $\ker p_d\subseteq
  G_d$. Therefore, $i_d$ is an injective map, and thus has a trivial
  kernel. It follows that $i_dk_d=0$ implies $k_d=0$. But this
  contradicts the assumption above. Hence $i$ is injective, and the
  result follows.
\end{proof}

It is worthwhile to demonstrate exactly how  this representation works. Chains in a graded $\kk[t]$-module of chains will have an expression that depends both on the birth times of their basis elements and on the point in time that the chain itself is considered. Consider the chain $ab+ac+cd$ in Figure~\ref{fig:zomorodian-example-graded}, existing at the 5th time step of the diagram. To find a $\kk[t]$-linear description of this chain, we need to promote each of the three basis elements, $ab$, $ac$ and $cd$, into the 5th time step. Since the edge $ab$ shows up in the second time step, this takes a coefficient of $t^3$. Similarly, $ac$ has a coefficient of $t$ and $cd$ a coefficient of $t^2$. All in all, the chain is $t\cdot ac+t^2\cdot cd+t^3\cdot ab$.

This is also relevant for combining elements of graded $\kk[t]$-modules: each (homogeneous) element has a degree it lives in --- and to sum different elements, ensuring a homogeneous result, the elements need to live in the same degree; the cycle basis from the situation in Figure~\ref{fig:zomorodian-example-graded} has as natural choices of basis elements the two cycles $z=ad-cd-t\cdot ab-t\cdot bc$ and $w=ac-t^2\cdot ab-t^2\cdot bc$. To express the other small cycle at time step 4, we will need to form a linear combination of the two cycles at hand \emph{at the time step we are interested in}. Hence, this other triangular cycle has the expression $w-t\cdot z=ac-t\cdot ad-t\cdot cd$.

\subsubsection{Graded Smith normal form}
\label{sec:graded-smith-normal}

The Smith normal form of a function $M\xrightarrow{f}N$ will compute a
simultaneous basis change for $M$ and $N$ such that in the new basis,
each basis element for $M$ pairs with a basis element in $N$; the map
takes any basis element to a scalar multiple of its paired element.

While developed for matrices over $\mathbb Z$ or $\mathbb N$, the
constructions all hold over any PID -- the Smith normal form is one of
the standard ways to prove the decomposition
Theorem~\ref{thm:decomposition-module-PID}. We shall describe how it
is usually computed, with special attention to the shortcuts and
differences that the case of graded modules over $\kk[t]$
introduce. We have been unable to find such an adaptation in the literature.

The usual algorithm allows the user to permute both rows and columns
of the matrix under treatment; since the grading is important to us,
we shall instead mark rows and columns as finished -- rather than to
permute them. 

We represent a map $M\xrightarrow{f}N$ by a matrix $F=(f_{ij})$ such
that $f(m_i) = \sum_j f_{ij}n_j$, where $m_i$ and $n_j$ are basis
elements in $M$ and $N$ respectively. 

A basis change that replaces $n_j$ by $n_j-rn_{j'}$ has the effect of
adding $r$ times the $j$th row to the $j'$th row. Similarly, a basis
change replacing $m_i$ by $m_i-rm_{i'}$ adds $r$ times the $i$th
column to the $i'$th column. We will be performing these operations
from left to right, from bottom to top, in a matrix that has been
pre-sorted to keep both bases of $M$ and $N$ in ascending degree
order.

The resulting algorithm is essentially forced by the requirement that
we can only add basis multiples of compatible orders: if
$|m_i|<|m_{i'}|$, we cannot influence $m_i$ using $m_{i'}$ -- we can
only influence $m_{i'}$ using $m_i$. Hence, the entries in the matrix
can only flow upwards in their influence; never downwards. 

\noindent\textbf{Algorithm:}
\begin{enumerate}
\item\label{algstep:loopstart} While there are untreated rows or
  columns: \\
  pick the lowest degree entry $f_{ij}$ of the
  lowest degree block of untreated columns.
\item By appropriate basis changes in $M$, clear out all entries of
  column $i$. Entries are cleared out upwards in the matrix.
\item By appropriate basis changes in $N$, clear out all entries of
  row $j$. Entries are cleared out leftwards in the matrix.
\item Mark row and column as finished.
\item Goto step \ref{algstep:loopstart}
\end{enumerate}


For extra clarity, we include two examples here. Consider the sequence
in Figure~\ref{fig:zomorodian-example-graded}. We can represent the chain
complex by a graded $\kk[t]$-module with chain basis $a, b, c, d, ab,
bc, ad, cd, ac, abc, acd$ and boundaries $tb-ta, c-tb, td-tc, td-t^2a,
t^2c-t^3a, t^3ab+t^3bc-tac, t^3cd-t^3ad+t^2ac$. Computing a kernel of
the boundary map, we get a basis for the cycle module given by:
\begin{align*}
  z_1 &= a & z_3 &= c & z_5&= tab+tbc+cd-ad\\
  z_2 &= b & z_4 &= d & z_6&= t^2ab+t^2bc-ac\\
\end{align*}

Writing the boundaries on this basis, we get a boundary module with a
basis given by $r_1, r_2, r_3, r_4, r_5, r_6, r_7$, and images in the
cycle basis given by 
\begin{align*}
  r_1 &\mapsto tz_2-tz_1 & 
  r_3 &\mapsto tz_4-tz_3 & 
  r_5 &\mapsto t^2z_3-t^3z_1 &
  r_7 &\mapsto t^3z_5 - t^2z_6 \\
  r_2 & \mapsto z_3-tz_2 & 
  r_4 &\mapsto tz_4-t^2z_1 & 
  r_6 &\mapsto tz_6 \\
\end{align*}

This is the presentation map for the finitely generated
$\kk[t]$-module that represents the persistent homology of the
sequence of spaces in Figure~\ref{fig:zomorodian-example-graded}, and by
computing a Smith normal form, we extract a barcode from the
example. We shall trace the corresponding matrix and its
modifications in Figure~\ref{fig:persistence-gsnf}. In each step, we mark the chosen entry $f_{ij}$ for the
next clearing out.
\begin{figure}
\begin{tikzpicture}
\begin{scope}[shift={(0.1,0)}]
  \matrix (M1) 
  [matrix of math nodes,
  left delimiter=(,
  right delimiter=)] 
  {
    -t & \cdot & \cdot & -t^2 & -t^3 & \cdot & \cdot \\
    \node[fill=blue!20] {t}; & -t &\cdot &\cdot &\cdot&\cdot&\cdot \\
    \cdot& 1 & -t &\cdot & t^2 &\cdot &\cdot \\
    \cdot&\cdot & t & t &\cdot&\cdot&\cdot  \\
    \cdot&\cdot &\cdot &\cdot &\cdot &\cdot & t^3 \\
    \cdot&\cdot &\cdot &\cdot &\cdot & t & -t^2 \\
  };
\end{scope}
\begin{scope}[shift={(7.3,0)}]
  \matrix (M1) 
  [matrix of math nodes,
  left delimiter=(,
  right delimiter=)] 
  {
    \cdot & -t & \cdot & -t^2 & -t^3 & \cdot & \cdot \\
    t & \cdot &\cdot &\cdot &\cdot&\cdot&\cdot \\
    \cdot& \node[fill=blue!20] {1}; & -t &\cdot & t^2 &\cdot &\cdot \\
    \cdot&\cdot & t & t &\cdot&\cdot&\cdot  \\
    \cdot&\cdot &\cdot &\cdot &\cdot &\cdot & t^3 \\
    \cdot&\cdot &\cdot &\cdot &\cdot & t & -t^2 \\
  };
\end{scope}
\begin{scope}[shift={(-2,-4)}]
  \matrix (M1) 
  [matrix of math nodes,
  left delimiter=(,
  right delimiter=)] 
  {
    \cdot & \cdot & -t^2 & -t^2 & \cdot & \cdot & \cdot \\
    t & \cdot &\cdot &\cdot &\cdot&\cdot&\cdot \\
    \cdot& 1 & \cdot &\cdot & \cdot &\cdot &\cdot \\
    \cdot&\cdot & \node[fill=blue!20] {t}; & t &\cdot&\cdot&\cdot  \\
    \cdot&\cdot &\cdot &\cdot &\cdot &\cdot & t^3 \\
    \cdot&\cdot &\cdot &\cdot &\cdot & t & -t^2 \\
  };
\end{scope}
\begin{scope}[shift={(3.8,-4)}]
  \matrix (M1) 
  [matrix of math nodes,
  left delimiter=(,
  right delimiter=)] 
  {
    \cdot & \cdot & \cdot & \cdot & \cdot & \cdot & \cdot \\
    t & \cdot &\cdot &\cdot &\cdot&\cdot&\cdot \\
    \cdot& 1 & \cdot &\cdot & \cdot &\cdot &\cdot \\
    \cdot&\cdot & t & \cdot &\cdot&\cdot&\cdot  \\
    \cdot&\cdot &\cdot &\cdot &\cdot &\cdot & t^3 \\
    \cdot&\cdot &\cdot &\cdot &\cdot & \node[fill=blue!20] {t}; & -t^2 \\
  };
\end{scope}
\begin{scope}[shift={(9,-4)}]
  \matrix (M1) 
  [matrix of math nodes,
  left delimiter=(,
  right delimiter=)] 
  {
    \cdot & \cdot & \cdot & \cdot & \cdot & \cdot & \cdot \\
     \node[fill=red!20] {t}; & \cdot &\cdot &\cdot &\cdot&\cdot&\cdot \\
    \cdot& \node[fill=red!20] {1}; & \cdot &\cdot & \cdot &\cdot &\cdot \\
    \cdot&\cdot & \node[fill=red!20] {t}; & \cdot &\cdot&\cdot&\cdot  \\
    \cdot&\cdot &\cdot &\cdot &\cdot &\cdot & \node[fill=red!20] {t^3}; \\
    \cdot&\cdot &\cdot &\cdot &\cdot & \node[fill=red!20] {t}; & \cdot \\
  };
\end{scope}
\draw[thick,->] (3,0) -- (4,0);
\draw[thick,->] (0.5,-4) -- (1.5,-4);
\draw[thick,->] (6,-4) -- (7,-4);

\draw[->,gray,rounded corners] (10.3,0) -- (11,0) -- (11,-2.1) -- (-5,-2.1) -- (-5,-4) -- (-4.5,-4);
\end{tikzpicture}
    \caption{First example: graded Smith normal form reduction for a persistence chain complex.}
  \label{fig:persistence-gsnf}
\end{figure}

From this endstate, we can easily read off the barcode -- especially
knowing that all our operations have maintained the degrees of rows
and columns; we have a new basis $z_1',z_2', z_3', z_4', z_5', z_6'$ of
the cycle module, and in this basis, the boundaries are given by
$tz_2', z_3', tz_4', t^3z_5', tz_6'$. The resulting barcode has
entries $(1,\infty), (1,2), (2,2), (2,3)$ in dimension 0 and entries
$(3,6), (4,5)$ in dimension 1.

In above example, we can read off the map $f$ directly since the
filtration and simplicial complex is small. The map here is the
map from the space of boundaries into the space of cycles
$B\rightarrow Z$. By the property of the boundary operator,
$\partial\cdot\partial = 0$, $B\subseteq Z$. Hence, we can
express each element of the boundary basis as a linear
combination of elements in the cycle basis. This gives us an
equation for each boundary basis element:
\begin{equation*}
b_i =  \sum_j f_{ij} z_j
\end{equation*}
we can find $f_{ij}$ by reducing $b_i$ with respect to $Z$,
giving us an explicit representation of the map.

Our second example is more abstract illustrating how the
algorithm works without an explicit chain space.

Suppose a persistence module has one presentation given by five
generators $x,y,z,u,v$ in degrees 1, 1, 2, 3, 3,  and four relations
$z+tx+ty, u+t^2x+t^2y, tv+t^2z+t^3y, tu+t^2z+t^3y$. We illustrate the
computation of its graded Smith normal form in
Figure~\ref{fig:module-gsnf}. From the end-state of the computation,
we can read off a new presentation with the generators $x,y',z',u',v'$
and relations $z', u', tv, t^3y'$, where $y'=y+2x, z'=z+ty+tx, u'=u+t^2y+t^2x,
v'=v-t^3x$. 

\begin{figure}
\hspace*{-2em}
 \begin{tikzpicture}
\begin{scope}[shift={(-1.5,0)}]
  \matrix [matrix of math nodes, left delimiter=(, right delimiter=)]
  (m) {
    t & t^2 & 0 & 0 \\
    t & t^2 & t^3 & t^3 \\
    \node [fill=blue!20] {1}; & 0 & t^2 & t^2 \\
    0 & 1 & 0 & t \\
    0 & 0 & t & 0 \\
  };
  \node [above=5pt of m-1-1,rotate=90,anchor=west] (top1) {$r_1$};
  \node [above=5pt of m-1-2,rotate=90,anchor=west] (top2) {$r_2$};
  \node [above=5pt of m-1-3,rotate=90,anchor=west] (top2) {$r_3$};
  \node [above=5pt of m-1-4,rotate=90,anchor=west] (top2) {$r_4$};
  \node [left=12pt of m-1-1,anchor=east] (left1) {$x$};
  \node [left=12pt of m-2-1,anchor=east] (left2) {$y$};
  \node [left=12pt of m-3-1,anchor=east] (left3) {$z$};
  \node [left=12pt of m-4-1,anchor=east] (left4) {$u$};
  \node [left=12pt of m-5-1,anchor=east] (left5) {$v$};
\end{scope}
\begin{scope}[shift={(2.7,0)}]
  \matrix [matrix of math nodes, left delimiter=(, right delimiter=)]
  (m) {
    t & t^2 & 0 & 0 \\
    0 & t^2 & 0 & 0 \\
    \node [fill=blue!20] {1}; & 0 & t^2 & t^2 \\
    0 & 1 & 0 & t \\
    0 & 0 & t & 0 \\
  };
  \node [above=5pt of m-1-1,rotate=90,anchor=west] (top1) {$r_1$};
  \node [above=5pt of m-1-2,rotate=90,anchor=west] (top2) {$r_2$};
  \node [above=5pt of m-1-3,rotate=90,anchor=west] (top2) {$r_3$};
  \node [above=5pt of m-1-4,rotate=90,anchor=west] (top2) {$r_4$};
  \node [left=12pt of m-1-1,anchor=east] (left1) {$x$};
  \node [left=12pt of m-2-1,anchor=east] (left2) {$y$};
  \node [left=12pt of m-3-1,anchor=east] (left3) {$z+ty$};
  \node [left=12pt of m-4-1,anchor=east] (left4) {$u$};
  \node [left=12pt of m-5-1,anchor=east] (left5) {$v$};
\end{scope}
\begin{scope}[shift={(7.9,0)}]
  \matrix [matrix of math nodes, left delimiter=(, right delimiter=)]
  (m) {
    0 & t^2 & -t^3 & -t^3 \\
    0 & t^2 & 0 & 0 \\
    \node [fill=blue!20] {1}; & 0 & t^2 & t^2 \\
    0 & 1 & 0 & t \\
    0 & 0 & t & 0 \\
  };
  \node [above=5pt of m-1-1,rotate=90,anchor=west] (top1) {$r_1$};
  \node [above=5pt of m-1-2,rotate=90,anchor=west] (top2) {$r_2$};
  \node [above=5pt of m-1-3,rotate=90,anchor=west] (top2) {$r_3$};
  \node [above=5pt of m-1-4,rotate=90,anchor=west] (top2) {$r_4$};
  \node [left=12pt of m-1-1,anchor=east] (left1) {$x$};
  \node [left=12pt of m-2-1,anchor=east] (left2) {$y$};
  \node [left=12pt of m-3-1,anchor=east] (left3) {$z+ty+tx$};
  \node [left=12pt of m-4-1,anchor=east] (left4) {$u$};
  \node [left=12pt of m-5-1,anchor=east] (left5) {$v$};
\end{scope}
\begin{scope}[shift={(12.3,0)}]
  \matrix [matrix of math nodes, left delimiter=(, right delimiter=)]
  (m) {
    0 & t^2 & -t^3 & -t^3 \\
    0 & t^2 & 0 & 0 \\
    \node [fill=blue!20] {1}; & 0 & 0 & t^2 \\
    0 & 1 & 0 & t \\
    0 & 0 & t & 0 \\
  };
  \node [above=5pt of m-1-1,rotate=90,anchor=west] (top1) {$r_1$};
  \node [above=5pt of m-1-2,rotate=90,anchor=west] (top2) {$r_2$};
  \node [above=5pt of m-1-3,rotate=90,anchor=west] (top2) {$r_3-t^2r_1$};
  \node [above=5pt of m-1-4,rotate=90,anchor=west] (top2) {$r_4$};
  \node [left=12pt of m-1-1,anchor=east] (left1) {$x$};
  \node [left=12pt of m-2-1,anchor=east] (left2) {$y$};
  \node [left=12pt of m-3-1,anchor=east] (left3) {$z'$};
  \node [left=12pt of m-4-1,anchor=east] (left4) {$u$};
  \node [left=12pt of m-5-1,anchor=east] (left5) {$v$};
\end{scope}
\begin{scope}[shift={(-0.5,-5)}]
  \matrix [matrix of math nodes, left delimiter=(, right delimiter=)]
  (m) {
    0 & t^2 & -t^3 & -t^3 \\
    0 & t^2 & 0 & 0 \\
    1 & 0 & 0 & 0 \\
    0 & \node [fill=blue!20] {1}; & 0 & t \\
    0 & 0 & t & 0 \\
  };
  \node [above=5pt of m-1-1,rotate=90,anchor=west] (top1) {$r_1$};
  \node [above=5pt of m-1-2,rotate=90,anchor=west] (top2) {$r_2$};
  \node [above=5pt of m-1-3,rotate=90,anchor=west] (top2) {$r_3-t^2r_1$};
  \node [above=5pt of m-1-4,rotate=90,anchor=west] (top2) {$r_4-t^2r_1$};
  \node [left=12pt of m-1-1,anchor=east] (left1) {$x$};
  \node [left=12pt of m-2-1,anchor=east] (left2) {$y$};
  \node [left=12pt of m-3-1,anchor=east] (left3) {$z+ty+tx$};
  \node [left=12pt of m-4-1,anchor=east] (left4) {$u$};
  \node [left=12pt of m-5-1,anchor=east] (left5) {$v$};
\end{scope}
\begin{scope}[shift={(5,-5)}]
  \matrix [matrix of math nodes, left delimiter=(, right delimiter=)]
  (m) {
    0 & t^2 & -t^3 & -t^3 \\
    0 & 0 & 0 & -t^3 \\
    1 & 0 & 0 & 0 \\
    0 & \node [fill=blue!20] {1}; & 0 & t \\
    0 & 0 & t & 0 \\
  };
  \node [above=5pt of m-1-1,rotate=90,anchor=west] (top1) {$r_1$};
  \node [above=5pt of m-1-2,rotate=90,anchor=west] (top2) {$r_2$};
  \node [above=5pt of m-1-3,rotate=90,anchor=west] (top2) {$r_3-t^2r_1$};
  \node [above=5pt of m-1-4,rotate=90,anchor=west] (top2) {$r_4-t^2r_1$};
  \node [left=12pt of m-1-1,anchor=east] (left1) {$x$};
  \node [left=12pt of m-2-1,anchor=east] (left2) {$y$};
  \node [left=12pt of m-3-1,anchor=east] (left3) {$z'$};
  \node [left=12pt of m-4-1,anchor=east] (left4) {$u+t^2y$};
  \node [left=12pt of m-5-1,anchor=east] (left5) {$v$};
\end{scope}
\begin{scope}[shift={(11,-5)}]
  \matrix [matrix of math nodes, left delimiter=(, right delimiter=)]
  (m) {
    0 & 0 & -t^3 & -2t^3 \\
    0 & 0 & 0 & -t^3 \\
    1 & 0 & 0 & 0 \\
    0 & \node [fill=blue!20] {1}; & 0 & t \\
    0 & 0 & t & 0 \\
  };
  \node [above=5pt of m-1-1,rotate=90,anchor=west] (top1) {$r_1$};
  \node [above=5pt of m-1-2,rotate=90,anchor=west] (top2) {$r_2$};
  \node [above=5pt of m-1-3,rotate=90,anchor=west] (top2) {$r_3-t^2r_1$};
  \node [above=5pt of m-1-4,rotate=90,anchor=west] (top2) {$r_4-t^2r_1$};
  \node [left=12pt of m-1-1,anchor=east] (left1) {$x$};
  \node [left=12pt of m-2-1,anchor=east] (left2) {$y$};
  \node [left=12pt of m-3-1,anchor=east] (left3) {$z'$};
  \node [left=12pt of m-4-1,anchor=east] (left4) {$u+t^2y+t^2x$};
  \node [left=12pt of m-5-1,anchor=east] (left5) {$v$};
\end{scope}
\begin{scope}[shift={(-0.5,-10.7)}]
  \matrix [matrix of math nodes, left delimiter=(, right delimiter=)]
  (m) {
    0 & 0 & -t^3 & -2t^3 \\
    0 & 0 & 0 & -t^3 \\
    1 & 0 & 0 & 0 \\
    0 & 1 & 0 & 0 \\
    0 & 0 & \node [fill=blue!20] {t}; & 0 \\
  };
  \node [above=5pt of m-1-1,rotate=90,anchor=west] (top1) {$r_1$};
  \node [above=5pt of m-1-2,rotate=90,anchor=west] (top2) {$r_2$};
  \node [above=5pt of m-1-3,rotate=90,anchor=west] (top2) {$r_3-t^2r_1$};
  \node [above=5pt of m-1-4,rotate=90,anchor=west] (top2) {$r_4-tr_2-t^2r_1$};
  \node [left=12pt of m-1-1,anchor=east] (left1) {$x$};
  \node [left=12pt of m-2-1,anchor=east] (left2) {$y$};
  \node [left=12pt of m-3-1,anchor=east] (left3) {$z'$};
  \node [left=12pt of m-4-1,anchor=east] (left4) {$u'$};
  \node [left=12pt of m-5-1,anchor=east] (left5) {$v$};
\end{scope}
\begin{scope}[shift={(5,-10.7)}]
  \matrix [matrix of math nodes, left delimiter=(, right delimiter=)]
  (m) {
    0 & 0 & 0 & -2t^3 \\
    0 & 0 & 0 & \node [fill=blue!20] {-t^3}; \\
    1 & 0 & 0 & 0 \\
    0 & 1 & 0 & 0 \\
    0 & 0 & t & 0 \\
  };
  \node [above=5pt of m-1-1,rotate=90,anchor=west] (top1) {$r_1$};
  \node [above=5pt of m-1-2,rotate=90,anchor=west] (top2) {$r_2$};
  \node [above=5pt of m-1-3,rotate=90,anchor=west] (top2) {$r_3-t^2r_1$};
  \node [above=5pt of m-1-4,rotate=90,anchor=west] (top2) {$r_4-tr_2-t^2r_1$};
  \node [left=12pt of m-1-1,anchor=east] (left1) {$x$};
  \node [left=12pt of m-2-1,anchor=east] (left2) {$y$};
  \node [left=12pt of m-3-1,anchor=east] (left3) {$z'$};
  \node [left=12pt of m-4-1,anchor=east] (left4) {$u'$};
  \node [left=12pt of m-5-1,anchor=east] (left5) {$v-t^3x$};
\end{scope}
\begin{scope}[shift={(11,-10.7)}]
  \matrix [matrix of math nodes, left delimiter=(, right delimiter=)]
  (m) {
    0 & 0 & 0 & 0 \\
    0 & 0 & 0 & \node [fill=red!20] {-t^3}; \\
    \node [fill=red!20]{ 1 } ;& 0 & 0 & 0 \\
    0 & \node [fill=red!20] {1}; & 0 & 0 \\
    0 & 0 & \node [fill=red!20] {t}; & 0 \\
  };
  \node [above=5pt of m-1-1,rotate=90,anchor=west] (top1) {$r_1$};
  \node [above=5pt of m-1-2,rotate=90,anchor=west] (top2) {$r_2$};
  \node [above=5pt of m-1-3,rotate=90,anchor=west] (top2) {$r_3-t^2r_1$};
  \node [above=5pt of m-1-4,rotate=90,anchor=west] (top2) {$r_4-tr_2-t^2r_1$};
  \node [left=12pt of m-1-1,anchor=east] (left1) {$x$};
  \node [left=12pt of m-2-1,anchor=east] (left2) {$y+2x$};
  \node [left=12pt of m-3-1,anchor=east] (left3) {$z'$};
  \node [left=12pt of m-4-1,anchor=east] (left4) {$u'$};
  \node [left=12pt of m-5-1,anchor=east] (left5) {$v'$};
\end{scope}

\draw[thick, ->] (0,0.3) -- (.7,0.3);
\draw[thick, ->] (4.2,0.3) -- (5.4,0.3);
\draw[thick, ->] (9.6,0) -- (10.1,0);

\draw[thick, ->] (1.5,-5) -- (2.5,-5);
\draw[thick, ->] (6.9,-5) -- (8,-5);

\draw[thick, ->] (1.5,-10.7) -- (2.5,-10.7);
\draw[thick, ->] (6.9,-10.7) -- (8,-10.7);

\draw[->,gray,rounded corners] (14,0) -- (14.3,0) -- (14.3,-1.8) -- (-3.5,-1.8) -- (-3.5,-4.6) -- (-3,-4.6);

\draw[->,gray,rounded corners] (12.7,-5) -- (13.2,-5) -- (13.2,-6.65) -- (-3.5,-6.65) -- (-3.5,-10.7) -- (-3,-10.7);

  \end{tikzpicture}

  \caption{Second example: graded Smith normal form for the
    persistence module given by $\langle x,y,z,u,v\rangle/\langle
    z+tx+ty, u+t^2x+t^2y, tv+t^2z+t^3y, tu+t^2z+t^3y\rangle$. The
    computation extracts the new basis elements $y'=y+2x$,
    $z'=z+ty+tx$, $u'=u+t^2y+t^2x$, $v'=v-t^3x$. In the final
    matrix, the elements corresponding to finite bars are shown:
    two bars of length 0, 1 of length 1 and one of length 3. The
    zero row corresponds to an infinite bar.  }
  \label{fig:module-gsnf}

\end{figure}

It is well worth noticing that the algorithms in existence (such as~\cite{cz2005,edelsbrunner2000topological,de_silva_dualities_2011,morozov_persistence_2005}) perform essentially the same tasks as we did using the Smith normal form to find a barcode for a homology computation; but taking significant shortcuts motivated by the particular cases the algorithms are built to deal with. In particular, most if not all algorithms only recover the pivot for the Smith normal form, and only encode the $\kk[t]$-module structure implicitly, in chosen basis orderings.

While this algorithm is certainly not the fastest option for
computing persistent homology, we believe it is the theoretically
easiest to understand. We include a more computer science
oriented presentation of the algorithm in Appendix~\ref{sec:algor-comp-pres}.

\subsection{Constructions}\label{sec:constructions}

\subsubsection{Presentation of Modules}\label{sec:presentation_modules}

Suppose $M$ is a finitely presented module over a ring $R$. Then, $M$
is given by some quotient $R^d/K$ for $K$ a finitely generated
submodule of $R^d$. If $R$ has global dimension at most 1, $K$ is a
free module, and thus has a finite basis expressed as elements of
$R^d$.

Hence, to track $M$ as a module, and to compute with elements of $M$,
it is enough to keep track of $d$ and of a basis of $K$. If the basis
of $K$ is maintained as a Gröbner basis, then we can compute normal
forms for any element of $M$ that have the property that if normal
forms of $m, n$ are equal, then the elements are equal in the quotient
module.

Quite often, we shall meet constructions where the module $M$ is more
naturally expressed as a subquotient of some semantically relevant
free module, possibly of higher than necessary rank. In this case, we
shall track two sets of data to enable computation in $M$: a basis of
the relations submodule $K$, and an extension of that basis to a basis
of the generators submodule $G$, resulting in two bases that track the
behaviour of $M=G/K$. Again, as long as $K$ has a basis maintained as
a Gröbner basis, it is enough to compute this normal form to compare
elements.

In particular, this is the case where the module $M$ is the homology of some chain complex. There, the free module is the module of chains, and the submodules $G$ and $K$ correspond to the modules of \emph{cycles} and \emph{boundaries} respectively.

We observe that this reflects practice among algorithms for
persistent homology: to compute the persistent homology with a
basis of a filtered simplicial complex, we maintain a cycle basis
and a boundary basis, reducing modulo the boundary at each step
to find out if the boundary $\partial\sigma$ of a newly
introduced simplex $\sigma$ is already a boundary or not.

\subsubsection{Useful Forms}\label{sec:useful_forms}

We have already seen the introduction of a \dfn{Smith normal form} to the graded $\kk[t]$-module context in Section \ref{sec:graded-smith-normal}. The Smith normal form transforms a matrix representing a map between free modules by basis changes in the source and target modules until the matrix is diagonal; all while respecting the grading of the modules.

Another fundamentally useful form that we will be using a lot is the \dfn{Row or Column echelon form} of a matrix. For a matrix $F$ representing a map $f:M\to N$ between free graded $\kk[t]$-modules, a row echelon form changes basis for $N$ to eliminate all redundancies in the information contained in the matrix, while a column echelon form changes basis for $M$ to achieve the same goal. Certainly, a Smith normal form is an echelon form, but the converse does not necessarily hold true.

The properties most interesting to us for an echelon form -- and we shall state these for the row echelon form; column echelon form holds mutatis mutandis --- are:
\begin{itemize}
\item All completely zero rows are at the bottom of the matrix
\item The leading coefficient of a non-zero row (the \dfn{pivot} of that row) is strictly to the right of the leading coefficient of the row above it
\item All entries in a column below a leading coefficient of some row are all zero.
\end{itemize}

The power of the echelon forms come in what they imply when you use a matrix to pick out a basis of a submodule: if the basis elements of a submodule of a free module are in the rows of a matrix, and this matrix is placed on row echelon form, then several problems concerning module membership and coefficient choice become easy to work on.

To determine if an element is in the submodule, add the element to the bottom row of the matrix, and put the matrix on row echelon form again. This, by force, will eliminate any entries in pivot columns of the new row; and modify elements to the right of the pivots that can reduce. Eliminating all column entries that occur as pivots of the previous basis puts the new element on a \dfn{normal form} with respect to the basis -- and this normal form is sufficient to determine equality modulo the submodule represented by the basis.

\subsection{Representation of Homomorphisms}\label{sec:rep_homomorphisms}

\textcite{cohen2009persistent} describe a method for
computing the kernel, image, and cokernel persistence for the
case of what they call compatible filtrations. For two
filtrations, $f$ over a space $X$ and $g$ over a space $Y$, the
two filtrations are compatible if $X \subseteq Y $ and the
restriction of $g$ to $X$ is equal to $f$. 

The algebraic equivalent (and generalization) of this is the notion of graded
maps.

\subsubsection{Dictionary of Operations}\label{sec:dictionary_operations}

\subsection{Two Modules}\label{sec:two_modules}
When given pairs of persistence modules, we must be able to represent
the maps between them, assuming of course, they follow the
assumptions in Section~\ref{sec:rep_homomorphisms}. This material
follows the treatment by \textcite[Chapter
  15]{eisenbud1995commutative}, specialized to the case of
persistence modules. We begin with presentations for modules $P$
and $Q$:

\begin{center}
  \begin{tikzcd}
    0 \arrow{r} & 
    G_P \arrow{r}{i_P} \arrow[dotted]{d}{\varphi|_{G_P}} & 
    F_P \arrow{r}{p_P} \arrow{d}{\varphi} &
    P \arrow{r} \arrow{d}{f} &
    0 \\
    0 \arrow{r} & 
    G_Q \arrow{r}{i_Q} & 
    F_Q \arrow{r}{p_Q} &
    Q \arrow{r} &
    0 \\  
  \end{tikzcd}
\end{center}

Such a presentation exhibits $P$ and $Q$ as quotients of free modules,
i.e. $P=F_P/i_PG_P$ and $Q=F_Q/i_QG_Q$. We call $F_P$ and $F_Q$
modules of generators of $P$ and $Q$ respectively, and $G_P$ and $G_Q$
modules of relations.

Over nice enough rings\footnote{Global dimension less than or
  equal to 1, which includes all PIDs and therefore all Euclidean
  domains}, both $G_P$ and $G_Q$ are free for all modules $P$. These
rings include all Euclidean domains.

Now, an arbitrary map $P\xrightarrow{f}Q$ can be represented by
a map from the generators of $P$ to the generators of $Q$, or in other
words a map $F_P\xrightarrow{\varphi}F_Q$. For such a map between the
generator modules to represent a map between the quotient modules it
needs to obey one condition: $\varphi \cdot i_P G_P \subseteq i_Q \cdot G_Q$. In other
words, any vector in $F_P$ that represents a relation in $P$, and
therefore represents the 0-element in $P$, has to be mapped to a
relation in $Q$ so that its image still represent the
0-element. Subject to this condition, any map between the free modules
$F_P$ and $F_Q$ will represent a map between $P$ and $Q$.

Among the core messages of this paper is the recognition that by
maintaining $F_P$ and $G_P$ we can compute just about anything to do
with persistence modules. In particular, if $F_P$ is kept in a nice
form -- so that all the elements of $G_P$ form a tagged sub-basis of
the basis of $F_P$, and all basis elements of $F_P$ not already in the
$G_P$ basis are kept \emph{reduced} with respect to $G_P$, then
questions about normal forms for elements of $P$, equality in $P$, and
many other relevant questions become easy to handle algorithmically.

Throughout the expositions below, it is universally helpful to keep
any module of relations on a row reduced echelon form and any module
of generators reduced and augmented by the relations. By doing this,
we can consider the effects on single basis elements of the various
constructions, and thereby the effects on a barcode presentation of a
module.

This kind of nested module presentation is to some extent inherent in
the persistence barcode: for the barcode to be accessible, the bases
for the persistence module expressed are kept in a format where any
finite barcode gives rise to a pair of basis elements -- one of $G_P$
that maps onto a basis element in $F_P$, with some grading shift
corresponding to the length of the barcode. By keeping the bases for
$F_P$ and $G_P$ synchronized so that all basis elements of $G_P$ form
halves of such pairs, the barcode becomes immediately connected to the
basis presentation.

To accomplish any useful computation with these representations
of persistence modules, we need to be able to have some basic
tools or manipulate them. In all cases, this will be done in
terms of some submodule of a joint quotient module. That is, by
considering both modules simultaneously and extracting the
relevant information we will end up in a situation where we input
presentations and as output get presentations, thus allowing
these tools to be composed together algorithmically. In addition,
we relate the matrix representations more familiar to
persistence literature to each case. 

\subsubsection{Chains or coefficients}
\label{sec:chains-or-coeff}

It is worth pointing out that depending on the application domain, two
different representations for the presentation may both be worth
while. One way to represent the generators and relations, especially
in an algebraic topological setting, is by giving explicit chains for
both in a chain module containing both. This way, the modules $G_P$
and $F_P$ are represented, essentially, by matrices representing the
maps $G_P\to C_*$ and $F_P\to C_*$ for the chain module $C_*$.

In this setting, algebraic manipulation on generators and relations
can be performed separately from each other. In particular, the
relations module can be set on a row echelon form independently of the
generators module. This is also the format emitted by existing
algorithms and implementations. One major drawback of this
representation is that the map $G_P\to C_*$ has to be pulled back to a
map $G_P\to F_P$ before a barcode can be computed.

Alternatively, the two maps $G_P\to F_P$ and $F_P\to C_*$ can be
maintained separately. In this setting, the components of a barcode
for $P$ are easy to extract; and it seems likely that the size of the
matrix $G_P\to F_P$ will be significantly smaller than a matrix
$G_P\to C_*$. However, each basis change in $F_P$ comes with the
requirement to modify two different matrices -- for each operation on
the matrix $F_P\to C_*$, the inverse operation has to be performed,
transposed, on the matrix for $G_P\to F_P$.

\subsubsection{Barcodes as module presentations}
\label{sec:barcodes-as-module}

For the particular case of a persistence barcode, the setting at hand
is very specific. The modules are $\mathbb N$-graded modules over
$\kk[t]$, and in the presentation of a persistence module $P$ as 
\[
0 \to G_P \xrightarrow{i_P} F_P \to P \to 0
\]
we maintain $G_P$ and $F_P$ as free submodules of a global \emph{chain
  module}, with a global specific basis given by the simplices in the
underlying simplicial complex. In this setting, we further maintain a
few invariants, in particular we ensure that $G_P$ and $F_P$ are the
results of the appropriate basis changes $S$ and $T$ from the Smith normal
form in order to guarantee that $i_P$ is the corresponding diagonal
matrix representation of the Smith normal form.

Hence, from an arbitrary presentation of a finitely presented
persistence module 
\[
0 \to G_P \xrightarrow{i_P} F_P \to P \to 0
\]
we get the barcode presentation by writing $\iota_P = Si_PT$ and then
replacing this presentation above with 
\[
0 \to T^{-1}G_P \xrightarrow{\iota_P} F_PS^{-1} \to P \to 0
\]

One concrete benefit of this approach is that in the new bases for
$T^{-1}G_P$ and $F_PS^{-1}$, we can match up basis elements between
the relations and the generators into birth/death pairs of barcode
intervals, and the corresponding diagonal entries in $\iota_P$ are
exactly on the form $kt^{\alpha}$ yielding a bar of length $\alpha$ in
the barcode.

\subsection{Direct Sum}\label{sec:direct-sum}
The first construction is the direct sum of $P$ and $Q$. This is
the simplest and most basic construction, but will be used in all
the other constructions. We construct it by taking the direct sum
across the relations and free part.
\[
0\to G_P\oplus G_Q 
\xrightarrow{\begin{pmatrix}i_P & 0 \\ 0 & i_Q\end{pmatrix}} 
F_P\oplus F_Q
\xrightarrow{\begin{pmatrix}p_P & 0 \\ 0 & p_Q\end{pmatrix}} 
P\oplus Q \to 0
\]
This is still a short exact sequence since with a direct sum the
maps can be composed as a direct sum of the component maps. 

In the case where the persistence module is represented as a matrix encoding a cycle basis and another matrix encoding a boundary operator, the new
representation is simply a block matrix with the component
matrices along the diagonal. Consider the matrix equations
\begin{eqnarray*}
R_P = D_P C_P\\
R_Q = D_Q C_Q
\end{eqnarray*}
The direct sum is given by
\begin{equation}
\begin{bmatrix}
R_P & 0\\0 & R_Q
\end{bmatrix}
=
\begin{bmatrix}
D_P & 0\\0 & D_Q
\end{bmatrix}
\begin{bmatrix}
C_P & 0\\0 & C_Q
\end{bmatrix}
\end{equation}

Assuming that the two inclusion maps $i_P$ and $i_Q$ are already on Smith normal form, the Smith normal form of the presentation map $i_P\oplus i_Q$ is given by the block matrix
\[
\begin{pmatrix}
  i_P & 0 \\ 0 & i_Q
\end{pmatrix}
\]

\subsection{Image}\label{sec:image}
Given a map, we will want to construct the image of the map
between the two modules. This is a relatively straightforward
construction. We first compute the image of the map between the
free parts: $\img \varphi : \img(F_P \rightarrow F_Q)$. The finite
presentation is then
\[
0\to G_Q
\xrightarrow{i_Q}
G_Q + \img\varphi
\to
\img(P\xrightarrow{f}Q) \to 0
\]

Note that the resulting image generators should be reduced with
respect to the relations to ensure that they are representative
generating elements.

It is sufficient to only consider $G_Q$ since we require $\varphi\cdot i_P 
G_P\subseteq i_Q\cdot G_Q$. If $\varphi$ is an inclusion, this implies
that $G_P\subseteq G_Q$. This is a general statement which
implies that if we have an inclusion, generators will map to
generators and boundaries map to boundaries.

In the context of this in topological terms as presented
in~\cite{cohen2009persistent}, we choose a basis for $F_P$ and $G_P$. By inclusion,
$\img \varphi = F_P$. Since $G_P \subseteq G_Q$, $G_P$ is
presented in a compatible basis for $F_p$ and so we simply extend
this basis to $G_Q$. Algorithmically this means finding computing
the persistent homology of $P$ then adding the simplices which
are in $Q$ and not in $P$ if they add any relations (positive
simplices are ignored as these would be in $F_Q$).

If we are starting with both module presentation on a barcode form, with the inclusion maps $i_P$ and $i_Q$ both on Smith normal form, then we can approach a Smith normal form for the presentation of the image by first computing the sequence of all $\varphi(f_j)$ for $f_j$ the basis of $F_P$ used.

Now consider the block matrix
\[
\begin{pmatrix}
  \begin{array}{c}
  \dots \varphi(f_1) \dots \\
  \vdots \\
  \dots \varphi(f_r) \dots \\
\end{array} & I_r & 0 \\
\begin{array}{c}
\dots i_P(g_1) \dots \\
  \vdots \\
  \dots i_P(g_s) \dots \\
\end{array}
& 0 & I_s \\
\end{pmatrix}
\]
with all of the images expressed in some basis of $F_Q$, and the right hand blocks consisting of an identity matrix in order to track the changes that will happen. Putting this matrix of a row reduced echelon form we will find a matrix on the shape
\[
\begin{pmatrix}
  \ast & \ast & 0 \\
  \ast & A & B
\end{pmatrix}
\]
where the blocks $(A B)$ carries the coefficients expressing the elements $i_P(g_j)$ in terms of the collection of elements $\varphi(f_j)$ and $i_P(g_j)$. In other words, this part is a presentation matrix for the map $G_Q\to G_Q+\img\varphi$ that we need. Putting this on Smith normal form yields barcodes for the image module.

\subsection{Cokernel}\label{sec:cokernel}

To compute a cokernel, we merely include the images of the
generators in $F_P$ among the relations for the cokernel as a
quotient module of $F_Q$.

\[
0\to G_Q \oplus F_P
\xrightarrow{i_P +  \varphi}
F_Q  \to 
\coker(P\xrightarrow{f}Q) \to 0
\]

To see why this is correct, first consider the cokernel of the
free modules. In this presentation if it is in the image, since
we have listed it among the relations, it maps to 0 in the
quotient module. One imaginable complication would be if $F_P$ has a basis element representing $0$ in $P$, but mapping to a basis element in $F_Q$ which is not killed by any element in $G_Q$. For a constellation like this, we would in the presentation above erroneously kill the generator in $F_Q$ by the presence in $F_P$; but the constellation is impossible, since $\varphi \cdot i_P \cdot G_P\subseteq i_Q \cdot G_Q$.

In terms of algorithms~\cite{cohen2009persistent}, we again compute a basis for
$F_P$ and $G_P$ just as in the image case. We now only consider
cycles and relations which do not appear in $P$.

\subsection{Kernel}\label{sec:kernel}

In order to construct the kernel for arbitrary finitely presented
modules, we first describe how to construct it for free
modules. Suppose we have a map between free modules $\psi: F\to
G$. Then the kernel of this map is the free submodule of $F$
consisting of elements that map to 0 in $G$. This can be computed
by the usual Gaussian elimination approach: we can write down a
matrix with rows in $G\oplus F$, each row a pair $(\psi(f), f)$
for $f$ a basis element of $F$. After putting this matrix on a
row-reduced echelon form, some of the rows will have entries only
in the $F$ part, all entries in the $G$ summand having reduced to
$0$. Projecting these rows back to $F$ gives us a basis for the
kernel.

The kernel of a map between finitely presented modules has to be
constructed in a two-step process. A generator of the kernel of
$P\xrightarrow{\varphi}Q$ is an element of $F_P$ such that its
image in $F_Q$ is also the image of something in $G_Q$. This is
to say that the free module of generators of
$\ker(P\xrightarrow{\varphi}Q)$ is given by the kernel of the map
between free modules $F_P\oplus G_Q\to F_Q$ given by
$(f,g)\mapsto(\varphi(f) - i_Q(g))$.

This takes care of the module $F_K$ of generators of
$\ker(P\xrightarrow{\varphi}Q)$. For a complete presentation we
also need the relations of this module. Since the kernel module
is a submodule of $P$, the relations are the restriction of the
relations in $P$ to the kernel $K$. These can be computed by
recognizing that they are given precisely by the elements in
$F_K$ such that their image of $F_K$ in $F_P$ coincides with
images of elements in $G_P$. Thus, we setup another map
$F_K\oplus G_P\to F_P$, defined by $(f,g)\mapsto f_P - i_P(g)$,
and compute the kernel of this map. This kernel is the module of
relations we need.

For this case, we cannot simply choose a basis and extend it in the
algorithmics (as the more complicated structure described above shows). There is no immediate way to take short cuts around the Smith normal form computation either.

Notice that if elements of the kernel are computed in some alternative
way, then the following lemma may be useful:

\begin{lemma}
  If $M$ is a free $\kk[t]$-module and $\phi: N\to M$ is
  $\kk[t]$-linear, then if $tz\in\ker\phi$, then $z\in\ker\phi$.
\end{lemma}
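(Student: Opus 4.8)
The plan is to use the fact that $M$ is \emph{torsion-free}, which is the crucial property of free $\kk[t]$-modules we have already established implicitly: since $M$ is free, if $t\cdot m = 0$ for some $m\in M$, then $m=0$ (this is exactly the observation used in the proof that finitely generated submodules of free modules are free). First I would take $z\in N$ with $tz\in\ker\phi$, which by definition means $\phi(tz)=0$. By $\kk[t]$-linearity of $\phi$, we have $\phi(tz) = t\cdot\phi(z)$, so $t\cdot\phi(z) = 0$ in $M$.

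Now $\phi(z)$ is an element of the free module $M$, and we have just shown it is annihilated by $t$. Since $M$ is free, it is torsion-free, so $t\cdot\phi(z)=0$ forces $\phi(z)=0$; that is, $z\in\ker\phi$. To make the torsion-freeness step self-contained one can argue directly: writing $\phi(z) = \sum_i c_i e_i$ in terms of a basis $\{e_i\}$ of $M$ with $c_i\in\kk[t]$, the equation $t\cdot\phi(z) = \sum_i (t c_i) e_i = 0$ forces each $t c_i = 0$ in $\kk[t]$, and since $\kk[t]$ is an integral domain (indeed a Euclidean domain, as shown above) this gives $c_i = 0$ for all $i$, hence $\phi(z)=0$.

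There is essentially no obstacle here: the statement is a one-line consequence of $\kk[t]$-linearity together with the absence of $t$-torsion in a free module. The only point requiring any care is making sure the reader sees that "$M$ free $\Rightarrow$ $M$ torsion-free" is being invoked, since it is this property — rather than freeness per se — that is doing the work; the same argument shows more generally that $\ker\phi$ is a \emph{saturated} submodule of $N$ whenever $M$ is torsion-free. I would keep the write-up to a couple of sentences: apply $\phi$, pull the $t$ out, and cancel it.
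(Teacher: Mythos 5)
Your proof is correct and follows essentially the same route as the paper: apply $\kk[t]$-linearity to get $t\cdot\phi(z)=0$ and then use that a free module has no $t$-torsion to conclude $\phi(z)=0$. The only difference is that you spell out the torsion-freeness via a basis expansion, which the paper simply asserts; this is a harmless (and arguably clarifying) elaboration.
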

\begin{proof}
  Since $M$ is free, the only way that $tw=0$ in $M$ is if
  $w=0$. Since $tz\in\ker\phi$, we know that $\partial(tz)=0$. But by
  $\kk[t]$-linearity, $\partial(tz) = t\partial(z)$. Hence,
  $t\partial(z)=0$, so $\partial(z)=0$ follows.
\end{proof}

In particular, if $(C, \partial: C\to C)$ is a free $\kk[t]$-linear
chain complex, then if $tz$ is a cycle, then so is $z$.

\subsection{Free Pullback}\label{sec:free_pullback}
Before continuing into new constructions, we recount a useful
construction on free modules: the free pullback. Given two maps,
$f$ and $g$ in the following diagram:
\begin{center}
  \begin{tikzcd}
    & A\arrow{d}{f}\\
    B \arrow{r}{g}& C
  \end{tikzcd}
\end{center}
we construct the pullback $P$ such that the following diagram
commutes
\begin{center}
  \begin{tikzcd}
    P\arrow[dotted]{r} \arrow[dotted]{d}& A\arrow{d}{f}\\
    B \arrow{r}{g}& C
  \end{tikzcd}
\end{center}
To compute $P$, we must set up a kernel computation: namely find
a basis of $\ker(f\oplus -g$). Graphically, we construct the following matrix

\begin{center}
  \begin{tikzpicture}[thick]
    
    \foreach \x/\y/\z in {0/0/1.6, 0.2/0.1/1.8, 0.4/0/1.5, 0.6/0.4/2, 0.8/0.4/1.7, 1/0/1.6} {
      \fill[pink] (-1+\x,1-\y) rectangle (-0.8+\x,1-\z);
    }

    \foreach \x/\y/\z in {1.2/0.6/1.7, 1.4/0.5/1.6, 1.6/0.2/1.9, 1.8/0.1/1.6, 2/0.7/2} {
      \fill[pink] (-1+\x,1-\y) rectangle (-0.8+\x,1-\z);
    }

    \foreach \x/\y/\z in { 2.2/0.1/1.8, 2.4/0.5/2, 2.6/0/1.4, 2.8/0.3/1.7, 3/0.3/1.6} {
      \fill[pink] (-1+\x,1-\y) rectangle (-0.8+\x,1-\z);
    }

    \foreach \x/\y/\z in {3.0/0/1.5, 3.2/0.2/2, 3.4/0.4/1.9, 3.6/0.1/1.6, 3.8/0.3/1.8} {
      \fill[pink] (-1+\x,1-\y) rectangle (-0.8+\x,1-\z);
    }

    \foreach \x in {0,0.2,...,3.8} {
      \fill[pink] (4+\x,2-\x) rectangle (4.2+\x,1.8-\x);
    }
    \draw[gray,step=2] (4,-2) grid (8,2);
    \node[coordinate,label=$F$] (a) at (0,-0.2) {$F$};
    \node[coordinate,label=$-G$] (a) at (2,-0.2) {$-G$};
    \draw[black] (-1,-1) rectangle (1,1);
    \draw[black] (1,1) rectangle (3,-1);

  \end{tikzpicture}

We perform a column reduction on the matrix $[F | -G]$ (on the
right) to compute a basis for the kernel. To keep track of the
basis we perform the same operations on the matrix on the right.
  \begin{tikzpicture}[thick]
    
    \foreach \x/\y/\z in {0/0/1.6, 0.2/0.1/1.8, 0.4/0/1.5,  0.8/0.4/1.7, 1/0/1.6} {
      \fill[pink] (-1+\x,1-\y) rectangle (-0.8+\x,1-\z);
    }
    \foreach \x/\y/\z in {0.6/0.4/2,1.4/0.5/1.6, 1.6/0.2/1.9} {
      \fill[blue] (-1+\x,1) rectangle (-0.8+\x,-1);
    }

    \foreach \x/\y/\z in { 2.4/0.5/2,  3.2/0.2/2, 3.6/0.1/1.6, 3.8/0.3/1.8} {
      \fill[blue] (-1+\x,1) rectangle (-0.8+\x,-1);
    }

    \foreach \x/\y/\z in {1.2/0.6/1.7,  1.8/0.1/1.6, 2/0.7/2} {
      \fill[pink] (-1+\x,1-\y) rectangle (-0.8+\x,1-\z);
    }

    \foreach \x/\y/\z in { 2.2/0.1/1.8, 2.6/0/1.4, 2.8/0.3/1.7, 3/0.3/1.6} {
      \fill[pink] (-1+\x,1-\y) rectangle (-0.8+\x,1-\z);
    }

    \foreach \x/\y/\z in {3.0/0/1.5, 3.4/0.4/1.9} {
      \fill[pink] (-1+\x,1-\y) rectangle (-0.8+\x,1-\z);
    }

    \fill[pink] (4,2) rectangle (4.2,1.8) ; 
    \foreach \x  in {0.2,0.4,...,3.8} {
      \foreach \y in {0,0.2,...,\x}{
      \fill[pink] (4+\x,2-\y) rectangle (4.2+\x,1.8-\y) ; 
    }}
    \foreach \x  in {0.6,1.4, 1.6, 2.4, 3.6, 3.8} {
            \fill[blue!60] (4+\x,2) rectangle (4.2+\x,0) ; 
    }
    \foreach \x  in { 2.4, 3.2, 3.6, 3.8} {
            \fill[blue!20] (4+\x,0) rectangle (4.2+\x,-2) ; 
    }
    \foreach \x  in { 3.4,2,3.2, 2.8} {
      \fill[white] (4+\x,2) rectangle (4.2+\x,0) ; 
    }
    \draw[black,step=2] (4,-2) grid (8,2);
    \node[coordinate,label=$F$] (a) at (0,-0.2) {$F$};
    \node[coordinate,label=$-G$] (a) at (2,-0.2) {$-G$};
    \draw[black] (-1,-1) rectangle (1,1);
    \draw[black] (1,1) rectangle (3,-1);

  \end{tikzpicture}
\end{center}

Once computed, the matrix on the right has some zero columns,
shown above in blue. The corresponding columns in the matrix on
the right, represent a basis for the kernel.  The projection maps
are given by taking the appropriate part of the basis: the top
part (the darker blue) represents the projection onto the domain
of $F$ and the lighter blue in the lower part represents the
projection onto the domain of $G$. Note that the elements in the
top right quadrant, represent the kernel of $F$ and the elements
with only non-zero entries in the lower half represent a basis
for the kernel of $G$.

With this construction we can restate the kernel construction in
Section~\ref{sec:kernel} graphically. The generators of the kernel $F_K$ are
given by the pullback
\begin{center}
  \begin{tikzcd}
    F_K\arrow[dotted]{r}{\pi_P} \arrow[dotted]{d}{\pi_Q}& F_P\arrow{d}{\varphi}\\
    G_Q \arrow{r}{i_Q}& F_Q
  \end{tikzcd}
\end{center}
 The relations $G_K$ are given by a second pullback
\begin{center}
  \begin{tikzcd}
    G_K\arrow[dotted]{r}{i_K} \arrow[dotted]{d}{\pi_Q}& G_P\arrow{d}{i_P}\\
    F_K \arrow{r}{\pi_P}& F_P
  \end{tikzcd}
\end{center}
giving the presentation $G_K \xrightarrow{i_K} F_K \rightarrow
\ker(\varphi)$.

\subsection{Pullback}\label{sec:pullback}

Given module presentations
\begin{align*}
0\to G_P \xrightarrow{i_P} F_P \xrightarrow{p_P} P &\to 0 \\
0\to G_Q \xrightarrow{i_Q} F_Q \xrightarrow{p_Q} Q &\to 0 \\
0\to G_R \xrightarrow{i_R} F_R \xrightarrow{p_R} R &\to 0   
\end{align*}
and maps $P\xrightarrow{f} R$ and $Q\xrightarrow{g} R$ given
as maps $F_P\xrightarrow{\varphi} F_R$ and $F_Q\xrightarrow{\psi} F_R$
such that $\varphi(G_P)\subseteq G_R$ and $\psi(G_Q)\subseteq G_R$, we
can define the \dfn{pullback}. This is the submodule of $P\oplus Q$
such that for any element $(p,q)$, $\varphi p = \psi q$.

Pullbacks are useful when dealing with modules -- they can express
very many algebraic concepts neatly and concisely, such as inverse images
of maps, fibre products, kernels, and so on.

To compute the pullback, we set this up analogously as with the
free pullback, as a kernel computation. The pullback is a
submodule of $P\oplus Q$, and is carved out exactly as the kernel
of the map $P\oplus Q\to R$ given by $(p,q)\mapsto \varphi p-\psi
q$.  Since this is the kernel of a module map, we can compute
this using the construction of Section~\ref{sec:free_pullback}.

\begin{center}
  \begin{tikzcd}
    0 \arrow{r} & 
    G_P\oplus G_Q \arrow{r}{} \arrow[dotted]{d}{\varphi|_{G_P} -\psi|_{G_Q}} & 
    F_P\oplus F_Q \arrow{r}{} \arrow{d}{\varphi -\psi} &
    P \oplus Q  \arrow{r} \arrow{d}{f-g} &
    0 \\
    0 \arrow{r} & 
    G_R \arrow{r}{} & 
    F_R \arrow{r}{} &
    R \arrow{r} &
    0 \\  
  \end{tikzcd}
\end{center}

Expanding the construction, we compute the following two free pullbacks
\begin{center}
  \begin{tikzcd}
    F_{PB}\arrow[dotted]{r}{\pi_P} \arrow[dotted]{d}{\pi_Q}& F_P\arrow{d}{\varphi}\\
    F_Q \arrow{r}{i_Q}& F_R
  \end{tikzcd}
  \hspace{2cm}
  \begin{tikzcd}
    G_{PB}\arrow[dotted]{r}{\pi_{\oplus}} \arrow[dotted]{d}{i_{PB}}& G_P\oplus G_Q \arrow{d}{i_\oplus}\\
    F_{PB} \arrow{r}{\pi_P}& F_{P}\oplus F_Q
  \end{tikzcd}
\end{center}

Here we begin to see that with presentation algorithms can be
broken down into a few key constructions.

\subsection{Pushout}\label{sec:pushout}

Pushouts are the dual construction to pullbacks. Some of their most
important uses is to glue together things that overlap slightly --
producing almost but not quite a direct sum. 

Given module presentations 
\begin{align*}
0\to G_P \xrightarrow{i_P} F_P \xrightarrow{p_P} P &\to 0 \\
0\to G_Q \xrightarrow{i_Q} F_Q \xrightarrow{p_Q} Q &\to 0 \\
0\to G_R \xrightarrow{i_R} F_R \xrightarrow{p_R} R &\to 0   
\end{align*}
and maps $R\xrightarrow{f} P$ and $R\xrightarrow{g} Q$ given
as maps $F_R\xrightarrow{\varphi} F_P$ and $F_R\xrightarrow{\psi} F_Q$
such that $\varphi(G_R)\subseteq G_P$ and $\psi(G_R)\subseteq G_Q$, we
can define the \dfn{pushout}. 

This is the cokernel of the map $R\to P\times Q$ given by $r\mapsto
(\varphi r, -\psi r)$. This way, any image of an element $R$ in either
$P$ or $Q$ can ``move across the $\times$'' in the pushout module
$P\times_R Q$. 

\subsection{Tensor products}
\label{sec:tensor-products}

The tensor product, and its various associated constructions, are
easiest to describe if we focus on using the presentation map
$G\to F$ instead of explicit chain representations for the
relations. We assume that readers have seen tensor products at
least in the context of vector spaces and we recommend
~\textcite{eisenbud1995commutative} as a reference.

We fix a basis $B_P$ for $F_P$ and $B_Q$ for $F_Q$. A basis for
$M\otimes N$ is given by $B_P\otimes B_Q$. 

Relations are all generated by elements on the form $r\otimes g$ or
$g\otimes r$ where $r$ is a relation and $g$ is a generator.

\begin{theorem}\label{thm:tensorpresentation}
  Assume that the underlying coefficient ring is a graded PID.

  Suppose $P$ is presented by $i: G_P\to F_P$ given on Smith normal
  form, with $B_P$ the corresponding basis of $F_P$, and $Q$ is
  presented by $j: G_Q\to F_Q$ also given on Smith normal form, with
  $B_Q$ its corresponding basis of $F_Q$.

  Then a Smith normal form presentation of $P\otimes Q$ is given by a
  basis $B_P\times B_Q$ for the generating module, and a basis element
  in the relations module for each $p\otimes q$ for $p\in B_P, q\in
  B_Q$ where at least one of $\alpha p$ and $\beta q$ is a
  relation. Let $\gamma$ be the generator of the ideal $\langle
  \alpha, \beta\rangle$. Then $\gamma p\otimes q$ is the sole relation
  influencing $p\otimes q$ in the Smith normal form presentation of
  $P\otimes Q$.
\end{theorem}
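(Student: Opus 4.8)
The plan is to read a presentation of $P\otimes Q$ off the given presentations of $P$ and $Q$ via right-exactness of $\otimes$, and then to compress the resulting relation module into diagonal form; one could equivalently decompose $P$ and $Q$ into cyclic summands through Theorem~\ref{thm:decomposition-module-PID}, tensor summand by summand, and reassemble, but the first route produces the basis $B_P\times B_Q$ directly.

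First I would tensor the free presentation $0\to G_P\xrightarrow{i}F_P\to P\to 0$ on the right by $F_Q$, then resolve $F_P\otimes Q$ using the presentation $0\to G_Q\xrightarrow{j}F_Q\to Q\to 0$ of $Q$ (tensoring with the free module $F_P$ preserves exactness). This assembles the standard presentation
\[
(G_P\otimes F_Q)\;\oplus\;(F_P\otimes G_Q)\;\longrightarrow\;F_P\otimes F_Q\;\longrightarrow\;P\otimes Q\;\longrightarrow\;0 .
\]
Since $F_P\otimes F_Q$ is free on $B_P\times B_Q=\{\,p\otimes q\,\}$, this already exhibits the generating module of $P\otimes Q$ as claimed; all the content is in the relation submodule $K\subseteq F_P\otimes F_Q$, namely the image of the left-hand map, which is generally \emph{not} yet on Smith normal form and must be reduced.

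Because $i$ and $j$ are on Smith normal form with bases $B_P$, $B_Q$, the relation module of $P$ has a basis of the form $\{\,\alpha_p\,p\,\}$ with $p$ ranging over the subset $S_P\subseteq B_P$ of generators that carry a relation and $\alpha_p$ the corresponding diagonal entry, and likewise $\{\,\beta_q\,q\,\}$ for $Q$. Hence $K$ is generated by the elements $\alpha_p\,(p\otimes q)$ for $p\in S_P$, $q\in B_Q$, together with $\beta_q\,(p\otimes q)$ for $p\in B_P$, $q\in S_Q$. Each of these is a ring multiple of a \emph{single} basis vector $p\otimes q$, so $K$ is the direct sum over $B_P\times B_Q$ of its components, the $(p\otimes q)$-component being the ideal $\langle\alpha_p,\beta_q\rangle$ — with the convention $\alpha_p=0$ when $p\notin S_P$ and $\beta_q=0$ when $q\notin S_Q$. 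Over a PID that ideal is principal, generated by $\gamma_{pq}=\gcd(\alpha_p,\beta_q)$, so $K=\bigoplus_{p,q}\langle\gamma_{pq}\rangle\,(p\otimes q)$, a free submodule whose inclusion is diagonal in the basis $\{\,p\otimes q\,\}$ of the target and the basis $\{\,\gamma_{pq}\,(p\otimes q)\,\}$ (over the pairs with $\gamma_{pq}\neq 0$) of $K$. For $R=\kk[t]$ with its standard grading each homogeneous diagonal entry $\alpha_p$, $\beta_q$ is, up to a unit, a power of $t$, hence so is $\gamma_{pq}$; thus $\gamma_{pq}$ is a unit exactly when one of $\alpha_p$, $\beta_q$ is, i.e. exactly when neither $\alpha_p p$ nor $\beta_q q$ is a genuine relation, in which case $p\otimes q$ is a free generator and carries no relation basis element. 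In the remaining cases $\gamma_{pq}(p\otimes q)$ is the sole relation supported on $p\otimes q$, and since $\gamma_{pq}$ is homogeneous and $p\otimes q$ lives in degree $|p|+|q|$, the presentation $0\to G_{P\otimes Q}\to F_P\otimes F_Q\to P\otimes Q\to 0$ so obtained is a degree-$0$ map of graded modules — precisely the graded Smith normal form asserted.

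The argument is short, and I expect the only genuinely delicate point to be organizational: checking that $K$ really splits as the direct sum of its one-generator pieces — this is what lets one replace $\langle\alpha_p,\beta_q\rangle$ by $\langle\gamma_{pq}\rangle$ component by component and land on a diagonal presentation — which in turn rests on the elementary observation that every tensored relation $\alpha_p(p\otimes q)$ or $\beta_q(p\otimes q)$ is supported on a single basis vector. Beyond that it is bookkeeping: tracking which pairs $(p,q)$ carry a relation, keeping the image generators reduced, and verifying that homogeneity and the grading shifts add up. The conceptual core is nothing more than the standard identification $R/(\alpha)\otimes_R R/(\beta)\cong R/(\alpha,\beta)$, which the right-exact sequence above reproves in the relevant bases.
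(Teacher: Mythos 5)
Your route is essentially the paper's own: tensor the two Smith-normal-form presentations (the paper writes down the same four-term resolution immediately after the theorem), observe that because each presentation is diagonal every induced relation is a ring multiple of a \emph{single} basis vector $p\otimes q$, and then collapse the two candidates $\alpha_p(p\otimes q)$ and $\beta_q(p\otimes q)$ into one generator $\gamma_{pq}(p\otimes q)$ of $\langle\alpha_p,\beta_q\rangle$. If anything you are more careful than the paper: grounding the construction in right-exactness and making explicit that the relation submodule splits as $\bigoplus_{p,q}\langle\alpha_p,\beta_q\rangle\,(p\otimes q)$ is precisely what licenses the componentwise replacement by the gcd, which the paper only gestures at.

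One sentence, however, is wrong, even though your own machinery repairs it immediately: you claim that $\gamma_{pq}$ is a unit exactly when neither $\alpha_p p$ nor $\beta_q q$ is a genuine relation, and that in this case $p\otimes q$ is a free generator carrying no relation. Under your own convention ($\alpha_p=0$ when $p$ carries no relation), the pair carries no relation exactly when $\gamma_{pq}=\gcd(0,0)=0$, not when $\gamma_{pq}$ is a unit. If, say, $\alpha_p$ is a unit (a length-zero bar, which does occur in the paper's examples) while $\beta_q=t^3$, then $\gamma_{pq}$ is a unit, $\beta_q q$ certainly is a relation, and $p\otimes q$ is not free --- it is annihilated by the relation $\gamma_{pq}(p\otimes q)$, exactly as the theorem asserts. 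The correct dichotomy is $\gamma_{pq}=0$ (no relation attached to $p\otimes q$, an infinite bar) versus $\gamma_{pq}\neq 0$ ($\gamma_{pq}(p\otimes q)$ is the sole relation); your direct-sum decomposition of the relation submodule already yields this, so the fix is a restatement of that one sentence rather than any change to the argument.
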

\begin{proof}
  First, if all relations for both $P$ and $Q$ are given as Smith
  normal forms, any relation on the form $\alpha p\otimes q$ for $p, q$
  images of the relations basis are going to be products of basis
  elements; and thus not expand bi-linearly. Hence, each relation
  induced by the relations construction above is already a coefficient
  times a basis element.

  To ascertain a graded Smith normal form, we also need to verify that
  each basis element occurs in only one relation. This is not
  immediately guaranteed from the construction above -- there is one
  small correction step needed. Certainly, if $\alpha p\in iG_P$ and
  $\beta q\in jG_Q$, then $\gamma p\otimes q$ has to be a
  relation. However, if both $\alpha\neq 0$ and $\beta q\neq 0$, then
  both $\alpha p\otimes q$ and $\beta p\otimes q$ will show up from
  the construction above; for these cases, replacing the two
  candidates by their common generator $\gamma p\otimes q$ is needed
  to construct a Smith normal form presentation.
\end{proof}

We may note that the tensor product of the presentation is longer than
the presentation suggested above: it would be 
\[
0\to G_P\otimes G_Q \to G_P\otimes F_Q\oplus F_P\otimes G_Q\to F_P\otimes
F_Q\to P\otimes Q\to 0
\]

This is a \emph{free resolution} of $P\otimes Q$; but not a minimal
one --- the method in the proof of Theorem \ref{thm:tensorpresentation}
tells us exactly how to reduce away the redundancy represented by the
syzygies in $G_P\otimes G_Q$ to get a minimal presentation --- of
length corresponding to the homological dimension of the module
category.

For an arbitrary size tensor product $P_1\otimes\dots\otimes P_n$
with $P_j$ having relations $G_j$, generators $F_j$, Smith normal
form presentation map $i_j$, and a basis of $F_j$ denoted by
$B_j$, a basis for the generators is given by $n$-tuples in
$B_1\times\dots\times B_n$, and relations are given from the
generator of the ideal generated by all the coefficients of
relations of factor basis elements of the relation.

We can summarize the algorithm above as follows:

\noindent\textbf{Tensor Product Algorithm}\\
\noindent\emph{Input}: Two module presentations $P$ and $Q$ with
reduced bases in Smith Normal Form
\begin{enumerate}
\item  The generators are the tensor product of the generators: $F_{P\otimes Q} = F_P \otimes F_Q$ --- generating elements are of the form  $(f_P(i),f_Q(j))$ for all generator elements in $P$ and $Q$
\item   Create a non-minimal list of relations by pairing relations and generators --- $G_{P\otimes Q} = F_P \otimes G_Q \oplus F_Q \otimes G_P$ 
\item Create a minimal representation of relations --- reduce
the relations modulo $G_{P}\otimes G_Q$ --- consider the pairs
  $(g_p,f_Q)$ and $(f_p,g_Q)$ such that $g_p\rightarrow f_p$ and
  $g_Q\rightarrow f_Q$ and only keep the relation which occurs first. 
\end{enumerate}
\noindent\emph{Output:} A presentation of the tensor product

\begin{algorithm}
\caption{Tensor Product Algorithm}
\begin{algorithmic}[1]
\STATE Input: Two module presentations $P$ and $Q$ with
reduced bases in SNF:\\ 
$G_P \rightarrow F_P  \rightarrow P$\\
$G_Q \rightarrow F_Q  \rightarrow Q$
\STATE Compute generators: $F_{P\otimes Q} = F_P \otimes F_Q =  (f_P(i),f_Q(j)) \forall f_P(i) \in F_P, f_Q(j) \in F_Q$
\STATE Compute non-minimal relations: $G_{P\otimes Q} = F_P \otimes G_Q \oplus F_Q \otimes G_P = (f_P(i),g_Q(j))\oplus (g_P(k),f_Q(\ell))\qquad \forall f_P(i) \in F_P, f_Q(\ell) \in F_Q, g_P(j) \in G_P, f_Q(k) \in F_Q$,  
\STATE For pairs $(g_p,f_Q)$ and $(f_p,g_Q)$ such that $g_p\rightarrow f_p$ and  $g_Q\rightarrow f_Q$, keep the relation which occurs first. 
\STATE{Output:} Relations, generators and a map:
$G_{P\otimes Q} \rightarrow F_{P\otimes Q}$
\end{algorithmic}
\end{algorithm}

We give an example of this process in Section~\ref{sec:relat-hom-modul}.

\subsubsection{Interpretations of tensor products}
\label{sec:interpr-tens-prod}

The tensor product described above is the tensor product
$M\otimes_{\kk[t]}N$, in which the $t$ parameter can be freely moved
``across'' the $\otimes$ sign. One can consider this to be a
tensor-product of persistence modules, in which time steps for the
conglomerate by stepping time arbitrarily on one side or the other.

There are more tensor products we can construct: by forgetting about
the persistence structure on one or the other side -- or by having a
persistence module $M$ and a $\kk$-vector space $N$ -- we can produce
tensor products $M\otimes_\kk N$. For these, the $t$ action does not
move fluidly between different factors of the tensor product; but we
can still recover a $\kk[t]$-module structure on the tensor product.

In fact, if we have persistence modules $M, N$, then
$M\otimes_{\kk}N$ has two different $\kk[t]$-module structures; these
are defined on the basis elements by
\[
t(m\otimes n) = tm\otimes n 
\qquad\qquad
t(m\otimes n) = m\otimes tn 
\]
and extended by linearity to the entire tensor product. We'll explain
the first of these structures -- the second one is completely
analogous. In this case, different persistence parameter values for
$N$ give completely different components of the product: stepping with
time only moves up on the $M$ side of the factor.

These constructions correspond to the different dualizing functors
introduced by~\textcite{de_silva_dualities_2011}. 

\subsubsection{Relations to $\hom$ modules}
\label{sec:relat-hom-modul}

Suppose $M$ is a persistence module. Then $M^*=\hom(M,
\kk[t])$ is a persistence module too, whose degree $d$ elements are
maps $M\to\kk[t]$ that change degree by $d$; so that if $|m|=n$ and
$|f|=d$ then $|f(m)|=n+d$.

This module has the structure of a graded $\kk[t]$-module by the rule
$(tf)(m) = f(tm)$, and we can easily verify that this $(tf)$ has an
increased degree as defined above.

Given a basis $B=\{m_1,\dots,m_k\}$ of $M$, there is a \emph{dual
  basis} $B^*=\{m^*_1,\dots,m^*_k\}$ of $M^*$ where $m^*_i(m_j)$ is the
Dirac delta function $\delta_{ij}$.

We consider the module $\hom(M, N)$ for persistence modules $M$,
$N$ consisting of maps $f: M\to N$ of arbitrary degree, just as
defined above: $f$ has degree $|d|$ if when $|m|=n$ then
$|f(m)|=d+n$.

\textbf{Example:} Suppose $M$ is given by two generators $x, y$ in degrees 1 and 2, and
the relations $t^3x, t^4y$. Suppose that $N$ is given by three
generators $u,v,w$ in degrees 1,1,2, and with relations $t^2u, tv,
t^4w$. The dual module $M^*$ has generators $x^*, y^*:M\to\kk[t]$
defined by $x^*(x) = 1, x^*(y)=0, y^*(x)=0, y^*(y)=1$. Thus, the
generators have degrees $-1$ and $-2$ respectively. We can establish
by inspection that $t^3x^*=0$ and $t^4y^*=0$, producing the relations
of $M^*$.

A generating set for $M^*\otimes N$ is given
by all pairs of elements: $x^*\otimes u, x^*\otimes v, x^*\otimes
w,y^*\otimes u, y^*\otimes v, y^*\otimes w$. Relations are all pairs
of generator with relation, thus given by the following list:
\begin{align*}
  t^3x^*\otimes u && 
  t^3x^*\otimes v && 
  t^3x^*\otimes w &&
  t^4y^*\otimes u && 
  t^4y^*\otimes v && 
  t^4y^*\otimes w \\
  x^*\otimes t^2u &&
  x^*\otimes tv &&
  x^*\otimes t^4w &&
  y^*\otimes t^2u &&
  y^*\otimes tv &&
  y^*\otimes t^4w \\
\end{align*}
Out of each column, the lowest degree of $t$ remains as the actual
relation, giving us the final list $t^2\cdot x^*\otimes u, t\cdot
x^*\otimes v, t^3\cdot x^*\otimes w, t^2\cdot y^*\otimes u, t\cdot
y^*\otimes v, t^4 y^*\otimes w$. 

One persistence homomorphism $f:M\to N$ is given by $f(x)=u+v$,
$f(y)=tu+w$. In the presentation above, this homomorphism is given by
the linear combination
\[
f = x^*\otimes u+x^*\otimes v+t\cdot y^*\otimes u+y^*\otimes w
\]
and by computing degrees of each term, summing degrees for the basis
elements in each pair, we can see that $f$ has degree 0 as a
homomorphism.

\subsection{Exterior powers}
\label{sec:external-powers}

The exterior power $\Lambda^2(M)$ is given by $M\otimes M/\sim$ where
$a\otimes b \sim -b\otimes a$. We get it by adding $f\otimes g +
g\otimes f$ for all generators $f, g$ of $M$ to the relations module
of the presentation of the tensor product $M\otimes M$.

We can choose a different
representation and handle the exterior powers. The appropriate
representation here is to assign as a generator of the exterior power
$\Lambda^m(M)$ a \emph{set} of $m$ distinct elements from the
component basis elements, paired with a permutation $\sigma\in S_m$
tracking the ``original'' order of the basis elements. This
permutation is important for sign handling when comparing different
elements -- $(M, \sigma) = (M, \tau)$ iff $\sigma\tau^{-1}$ is an even
permutation; if $\sigma\tau^{-1}$ is an odd permutation, then
$(M,\sigma) = -(M, \tau)$. 

\subsubsection{Discrimination with exterior powers}
\label{sec:discr-with-extern}

The exterior power construction has some discriminating power between
possible barcodes, even without computing a Smith normal form
directly; just computing the exterior power can in some cases
distinguish between different possibilities for barcodes.

As an example, consider a persistence module $M$ with two finite
barcodes. It has generators in degrees $1, 2$ and relations in degrees
$5, 10$. Just from degree reasons, we know that the possible barcodes
are either $(1,5); (2,10)$ or $(1,10); (2,5)$.

Call the generators $x, y$; the exterior power $\Lambda^2M$ has as its
generating set the set of all cardinality 2 subsets of $\{x,y\}$,
which is to say the only basis element is $x\wedge y$.

As for the relations, a relation $at^dx+bt^ey$ gives rise to
relations $at^d(x\wedge y)$ and $bt^e(x\wedge y)$; which together
have as their basis $ct^f(x\wedge y)$ where $ct^f$ is the
higher-degree choice out of $at^d$ and $bt^e$. The higher degree
choice is really just the gcd and is  due to the nature of basis changes in these modules; if
$x$ exists earlier than $y$, then $y$ can be replaced by
$y-\lambda t^kx$, but $x$ cannot be replaced by $x-\lambda
t^{-k}y$. Since the relation is homogeneous, the earlier existing
basis element is the one with the higher degree power of $t$
associated.

The presentation matrix for $M$ will be
\[
\begin{pmatrix}
  at^4 & ct^9 \\ bt^3 & dt^8
\end{pmatrix}
\]

This yields a presentation matrix for $\Lambda^2M$ that has entries in
degree order
\[
\begin{pmatrix}
  -at^4 \\ at^4 \\ bt^3 \\ -bt^3 \\ -ct^9 \\ ct^9 \\ dt^8 \\ -dt^8
\end{pmatrix}
\]
with Smith normal form dependent on which of $a, b, c, d$ are
non-zero. In particular, if $b\neq 0$, then $t^3x\wedge y$ is a
relation, and if $b=0$ then $t^4x\wedge y$ is a relation.

Consider now the two cases we gave here. In the first option, our
relations are $t^4x, t^8y$, and in the second they are $t^9x,
t^3y$. If the relation in the exterior square is $t^4 x\wedge y$ or
$t^3 x\wedge y$, this tells us which of the two options applies to our
case.

This agrees with what we would have expected from a computation of a
barcode for $M$ before we computed the exterior algebra. Indeed, if
$b\neq 0$, then the Smith normal form computation of the presentation
for $M$ would have started by a change of basis, replacing $y$ by
$y'=y+a/b tx$. As a result, the relation transforms into $bt^3y'$,
giving a barcode $(2,5)$ which settles the entire module structure.

\subsection{Symmetric powers}
\label{sec:symmetric-powers}

The symmetric power $S^2(M)$ is given by $M\otimes M/\sim$ where
$a\otimes b \sim b\otimes a$. We get it by adding $f\otimes g -
g\otimes f$ for all generators $f, g$ of $M$ to the relations module
of the presentation of the tensor product $M\otimes M$.

Alternatively, one may represent the symmetric power $S^m(M)$ with a basis
corresponding to \emph{multisets} of weight $m$ of basis elements. The same
construction holds for finding relations as for the tensor products --
with the generating element of the coefficients that kill each
factor basis element in the multiset killing the compound basis element.

As opposed to exterior powers and the example in
Section~\ref{sec:discr-with-extern}, symmetric powers force
commutativity instead of anti-commutativity. Where $\Lambda^k(M)$
eventually vanishes for finite dimensional $M$, $S^k(M)$ is finite
dimensional and non-trivial whenever $M$ is.

\section{Applications}
\label{sec:applications}
Although our primary goal was to describe a general framework for
manipulating persistence modules in a common framework, there
are several application areas that have motivated the
constructions in this paper and show great promise in
applicability for the algorithms and ideas present. We shall
discuss some of these here.

\subsection{Torsion complexes and persistent relative homology}
\label{sec:pers-rel-homol}

Consider a persistent chain module with torsion -- a $\kk[t]$-module with some non-trivial relations, and with an endomorphism $\partial$ of degree 0 that squares to 0. Such modules arise in various contexts -- as part of stepwise approximation in spectral sequence approaches to persistent homology, as well as when computing relative homology in a persistent setting. In the former, torsion comes from the fact that we are dealing with persistence modules and in the latter comes from the quotienting at a space level.  

Now, the persistent homology of this directed system $C_*^*$ is
simply homology with coefficients in $\kk[t]$ of the torsion
persistence module.  In other words, we can represent torsion at
the chain level and compute homology as we normally would using
the techniques described earlier in this paper; we have all the
tools necessary for the computation of persistent homology of a
torsion chain complex. For a complete example of computing
persistent relative homology see Appendix~\ref{sec:an-example}.

\subsection{Spectral sequences}
\label{sec:spectral-sequences}

This paper was originally motivated by our work on spectral sequences in persistent homology\cite{lipsky_spectral_2011}. For our interest in spectral sequences where each entry is a persistence module, all fundamental algebraic constructions need to be accessible -- to compute subsequent pages of the spectral sequence, one repeatedly computes homologies locally at each point $E^k_{p,q}$ of a boundary map that successively approximates an end-result closer and closer.

For the first step, going from $E^0$ to $E^1$, the modules can usually be assumed to be free modules, and the justifications by~\textcite{cz2005} are enough to assert computability; but even for spectral sequences going from $E^1$ to $E^2$ and beyond, a firm grasp of the computation of kernels and cokernels of persistence modules with possible torsion is needed.

With the algorithms stated here, we are able to produce code for computing spectral sequences with persistence modules. These spectral sequences could be bases for several interesting directions for further study: parallelization by using a spectral sequence to merge local information into global, as described in~\cite{lipsky_spectral_2011}, or computing homology of the total space of a combinatorial fibration using fibres and base space by adapting the Serre spectral sequence as described by~\textcite{goerss_simplicial_1999} are two approaches that come to mind.

\subsection{Unordered persistence computation}
\label{sec:unord-pers-comp}

Many of the most efficient algorithms for constructing filtered complexes in the first place, such as the algorithms proposed by \textcite{cz2005}, produce simplices in some order compatible with the inclusion ordering on the simplices, but not necessarily compatible with the chosen filtration ordering. As an alternative to re-sorting the simplices this algebraic approach offers us algorithms to compute with an essentially unordered simplex sequence, performing reductions as needed, based on the new simplices acquired.

If the critical values of the filtration function are not known before hand, everything in this section can still be performed as long as one is prepared to adjust degrees for all elements in play along the way; for exposition, we shall assume that each simplex has a global $\mathbb{N}$-graded degree associated to it.

The classical persistence algorithm as described by \textcite{edelsbrunner2000topological} and by \textcite{cz2005} works by computing a Gaussian elimination on the boundary matrix, respecting the grading on the rows and columns involved by the choice of ordering.

When introducing a new boundary basis element, this will affect the current state of the Gaussian elimination running under the hood by first reducing the new basis element with everything that was already there -- by filtration value -- when the new element showed up, and then using the thus reduced element to further reduce any later additions to the boundary basis. This effect of the grading is transparent enough that we can fit it in a variation of the original persistence algorithm.

Adding a new simplex $\sigma$, we compute $\partial\sigma$, and reduce $\partial\sigma$ by all the boundaries that were added at earlier filtration values than that of $\sigma$. If the result is 0, then clearly, $\partial\sigma$ was a boundary, and so $\sigma$ is the leading simplex of  a cycle.

If $\partial\sigma$ does not vanish when reduced modulo the current boundary basis, then the leading simplex of the reduction of $\partial\sigma$ is some simplex $\tau$ associated to the cycle basis. Thus, $\sigma$ is a pre-boundary, bounding the cycle that has $\tau$ as its leading term. We introduce the interval $(f(\tau),f(\sigma))$ as a new interval.

Invariant of the persistence algorithm: at any stage, the simplices are paired up in a partial matching such that if $(\tau,\sigma)$ is a pair, then $\partial\sigma$ reduces modulo all the boundaries existing at the time of the introduction of $\sigma$ to a chain with leading simplex $\tau$.

This invariant can change with the introduction of some new simplex $\upsilon$ in the following ways:
\begin{itemize}
\item $\partial\upsilon$ reduces modulo earlier boundaries to 0. Thus, $\upsilon$ starts a new cycle.
\item $\partial\upsilon$ reduces modulo earlier boundaries to some cycle with leading simplex $\theta$. 
  \begin{itemize}
  \item If $\theta$ is the leading simplex of some later boundary basis element, linked to a generating pre-boundary simplex $\tau$, then there is some pair $(\theta,\tau)$. If $f(\upsilon) < f(\tau)$, then this pair is modified to $(\theta, \upsilon)$ and the boundary linked to $\tau$ is further reduced using $d\upsilon$. 

    This may well lead to cascading changes, as the resulting new boundary chains in turn reduce later and later chains, and replace more and more paired simplices.
  \item Otherwise, $\theta$ is an unpaired cycle simplex (since the ordering \emph{is} compatible with inclusion of simplices), and therefore creates a new simplex pairing and a new interval in the barcode.
  \end{itemize}
\end{itemize}

\subsubsection{Example}
\label{sec:example}

Take the 2-simplex spanned by $1,2,3$ and introduce the simplices in the following order, with the filtration values given as well:
\[
\begin{array}{r|ccccccc}
  \text{Simplex} & 1 & 2 & 12 & 3 & 13 & 23 & 123 \\ \hline
  \text{Filtration value} & 1 & 4 & 6 & 2 & 3 & 5 & 7
\end{array}
\]

At the start of the algorithm, nothing is out of the ordinary -- as long as filtration values increase monotonically, this works exactly as the classical persistence algorithm would. Hence, after consuming 3 simplices, we have a cycles basis consisting of $1$, and a boundary basis consisting of $2-1$, and a pairing of $(2,12)$.

Then arrives the simplex $3$. We compute $\partial(3) = 0$, and add $3$ to the cycles basis.

Next comes the simplex $13$. The boundary is $3-1$, and we pair it with $3$, yielding the pairings $(2,12), (3,13)$. At this stage, we have a cycles basis of $1$ and a boundary basis of $2-1, 3-1$. It is worth noting at this point that everything involved knows of its filtration value, and therefore of its own grading.

Now, the simplex $23$ arrives. The 1-simplices are ordered by their filtration values as $f(13) < f(23) < f(12)$, which concretely means that $\partial(23) = 3-2$ only gets reduced by $3-1$, to the value $-2+1 = -\partial(12)$. Further reductions are impossible with the older parts of the boundary basis, and thus this simplex \emph{should have originally paired} with its leading element. We adjust the pairing $(2,12)$ to $(2,23)$. Furthermore, the previously not entirely reduced element of the boundary basis given by $\partial(12) = 2-1$ now is reducible with this new simplex information; at which point $12$ transforms from a pre-boundary to a cycle.

Finally, the introduction of $123$ kills the cycle $12$ as expected.

\section{Conclusions \& Future Work}\label{sec:conclusions--future}

The primary motivation is to provide a general framework for more
complex operations with persistence modules --- hopefully
reducing the need to design new algorithms from scratch for new
constructions. The trade-off is that these are not optimal in
terms of running time or space but provide the flexibility for
experimentation.

We have demonstrated how techniques from commutative and
computational algebra influence the study and the computation of
persistent homology. We have shown how the classical concept of
Smith normal form can be adapted to graded modules over a graded
PID, and how this adaptation leads to concrete algorithms for
computing with persistent barcodes, and maps between persistence
modules. The common representation we use is the presentation of
a finitely generated module. By making the representation
explicit, we are able to state algorithms which take as input
presentations and output presentations, allowing for easy
composition of different constructions.

It would be interesting to extend the understanding of the interplay
between algebra and topology -- understand how different categories of
modules, different representation theories affect the computation of
persistence. We have already seen three categories that are useful for
different approaches to persistence -- modules over $\kk[t]$, modules
over $\kk Q$ for a quiver of type $A_n$. Furthermore, the representation theory of
$\kk[s,t]$ and the classification by
\textcite{gabriel1972unzerlegbare} of representations into tame and
wild types were instrumental to proving the limitations of
multidimensional persistence. 

\printbibliography

\appendix
\section{Algorithm for Computing Presentations of Persistence Modules}
\label{sec:algor-comp-pres}

In this appendix we give a more explicit algorithmic description
of Section~\ref{sec:graded-smith-normal}. We begin by describing computing standard
persistent homology.

Given a filtered simplicial complex, we write down its graded
boundary operator. The added needed data structure which is a
data structure which returns the relative grading between two
simplices. If we have a discrete parameter filtration (or if we
discretize a continuous parameter), then the relative grading is
the time difference between when the simplices appear. 

Therefore we can write down the boundary operator matrix using
the standard formula
\begin{equation}
\partial(\sigma) = \sum_i (-1)^{i} [v_1,v_2,v_{i-1},v_{i+1},v_n]
\end{equation}
where the $v_i$'s are the vertices of the simplex $\sigma$.

For each non-zero entry, we multiply it by the relative grading
of the two corresponding simplices. Since we are in chain space,
each row/column pair corresponds to two simplices. We adopt the
following conventions: we sort the row space such that each
column of the matrix has decreasing powers of $t$. This is
equivalent to sorting the simplices in order of appearance in the
filtration. We sort the columns, such that each row has
increasing powers of $t$. This also corresponds to sorting the
columns in terms order of appearance. 

We now introduce the standard reduction algorithm. We use the
following notation
\begin{itemize}
\item $F$ - a graded $m\times n$ matrix such that columns have decreasing powers of $t$ and rows have increasing powers of $t$
\item $F_i$ - $i$-th column vector
\item $F_{i,j}$ - $(i,j)$-th entry in $F$
\item $pivots(\cdot)$ -  a lookup table where we store pivot elements, it returns the column index which contains the pivot element of that row; if empty returns $\emptyset$
\item $diff(\cdot,\cdot)$ - a function which return the relative grading difference between the two inputs
\item $C$ - matrix which stores chains
\item $Z$ - kernel basis of $F$ (cycle basis)
\item $B$ - image basis of $F$ (boundary basis)
\end{itemize}

When computing persistence we set $F$ to the graded boundary matrix $\partial$. 
\begin{algorithm}[h]
\caption{Reduction Algorithm}
\begin{algorithmic}[1]
\STATE Input: $F = \partial$ 
\STATE Initialize $pivots=\emptyset$,  $C=I$ ($n\times n$ identity matrix ),$Z=\emptyset$, and $B=\emptyset$
\FOR{i=1,\ldots,n}
\WHILE{$F_i\neq 0$}
\STATE Set $j\leftarrow$ lowest non-zero entry (smallest power of $t$)
\IF{$pivots(j)\neq \emptyset$}
\STATE $X =  diff(pivots(j),j)$
\STATE $F_i = F_i - t^X (F_{i,j}/F_{i,pivots(j)}) F_{pivots(j)}$
\STATE $C_i = C_i - t^X (F_{i,j}/F_{i,pivots(j)}) C_{pivots(j)}$
\ELSE
\STATE Set $pivots(j) \leftarrow i$
\STATE Add $F_i$ to $Z$
\STATE Break
\ENDIF
\IF{$F_i = 0$}
\STATE Add $C_i$ to $Z$
\ENDIF
\ENDWHILE
\ENDFOR
\end{algorithmic}
\end{algorithm}

Note that this gives the basis elements of $Z$ and $B$ in order
of appearance. Each element is represented as a graded chain. As
in the example in Section~\ref{sec:graded-smith-normal}, we now
must write down the map $B\rightarrow Z$.  To obtain the barcode,
we must reduce the map to SNF.

The reduction algorithm of the map is almost identical to the
algorithm above except that once a pivot is found, we reduce the
rows above it and we do not need to store the change of basis if
we only care about the barcode. We use the same notation as
above, except $F: B\rightarrow Z$ and since we will need row
operations as well, $F^i$ represents the $i$-th row of $F$.
\begin{algorithm}[h]
\caption{Smith Normal Form Algorithm}
\begin{algorithmic}
\STATE Input $F: B\rightarrow Z$
\STATE $pivots=\emptyset$
\FOR{$i=1,\ldots,n$}
\WHILE{$F_i\neq 0$}
\STATE Set $j\leftarrow$ lowest non-zero entry (smallest power of $t$)
\IF{$pivots(j)\neq \emptyset$}
\STATE $X =  diff(pivots(j),j)$
\STATE $F_i = F_i - t^X (F_{j,i}/F_{j,pivots(j)}) F_{pivots(j)}$
\ELSE
\STATE Set $pivots(j) \leftarrow i$
\FOR{$k=j-1,\ldots,1$}
\STATE $Y =  diff(j,k)$
\STATE  $F^k = F^k  - t^Y (F_{k,i}/F_{k,j}) F^j$
\ENDFOR
\ENDIF
\ENDWHILE
\ENDFOR
\STATE All the pivots represent bars beginning at the grading of the corresponding generator and lasting the grading of the entry
\STATE All zero columns represent infinite bars beginning at the grading of the corresponding generator 
\end{algorithmic}
\end{algorithm}

To put this into context of the algorithm described in
\textcite{}. As above it computes cycles in order of appearance
and stores them by only keeping track of the element with the
lowest power of $t$. When computing boundaries, it only considers
the elements corresponding to elements in the cycle
basis. Therefore, the pivot already corresponds to an entry in
the SNF, since it represents a map from a boundary element to a
cycle (represented by one element --- the last simplex which
``created'' the cycle). While this is more efficient, it makes it
more difficult to generalize since we do not always have the
freedom to choose such nice bases .

To generalize the above algorithm, we must have the ability to
find relative gradings between chains. In the above case, this
can be done by mapping back to simplices.  In general with more
complicated operations, there may be many mappings to get back to
chain space. However, due to homogeneity, it is sufficient if we
can compare two general elements of the row space and column
space. Therefore in the above example, as we compute a basis, we
would build an associated structure which given two boundaries or
two cycles, returns their relative grading. 
 
\section{Relative Persistent Homology Example}
\label{sec:an-example}

Consider the 2-simplex, with each face introduced at a separate time. The extended homology of this is persistent homology of the persistent chains of this complex, with each simplex added to the torsion relations in inverse order. Faces are introduced at times $0$ through $6$, and added to the torsion relations at times $7$ through $13$.

The chain complex has the finite presentation 
\[
C=\frac{\sigma_0,\sigma_1,\sigma_2,\sigma_{01},\sigma_{02},\sigma_{12},\sigma_{012}}
{t\sigma_{012},t^3\sigma_{12},t^5\sigma_{02},t^7\sigma_{01},t^9\sigma_{2},t^{11}\sigma_{1},t^{13}\sigma_{0}}
\]

The boundary operation defined on $C$ is the normal boundary operation on simplicial complexes, defined by:
\begin{align*}
  \partial\sigma_0 &= 0 & \partial\sigma_{01} &= t^3\sigma_0-t^2\sigma_1 \\
  \partial\sigma_1 &= 0 & \partial\sigma_{02} &= t^4\sigma_0-t^2\sigma_2 \\
  \partial\sigma_2 &= 0 & \partial\sigma_{12} &= t^4\sigma_1-t^3\sigma_2 \\
  && \partial\sigma_{012} &= t^3\sigma_{01}-t^2\sigma_{02}+t\sigma_{12}
\end{align*}

To compute the kernel module, we first compute its generators by doing Gaussian elimination on the matrix to the left, yielding the result to the right:
\[
\left(
  \begin{array}{c|c|c}
    0 & \sigma_0 & 0 \\
    0 & \sigma_1 & 0 \\
    0 & \sigma_2 & 0 \\
    t^3\sigma_0-t^2\sigma_1 & \sigma_{01} & 0 \\
    t^4\sigma_0-t^2\sigma_2 & \sigma_{02} & 0 \\
    t^4\sigma_1-t^3\sigma_2 & \sigma_{12} & 0 \\
    t^3\sigma_{01}-t^2\sigma_{02}+t\sigma_{12} & \sigma_{012} & 0 \\
    t\sigma_{012} & 0 & \rho_0 \\
    t^3\sigma_{12} & 0 & \rho_1 \\
    t^5\sigma_{02} & 0 & \rho_2 \\
    t^7\sigma_{01} & 0 & \rho_3 \\
    t^9\sigma_{2} & 0 & \rho_4 \\
    t^{11}\sigma_{1} & 0 & \rho_5 \\
    t^{13}\sigma_{0} & 0 & \rho_6
  \end{array}
\right)
\qquad
\left(
  \begin{array}{c|c|c}
    0 & \sigma_0 & 0 \\
    0 & \sigma_1 & 0 \\
    0 & \sigma_2 & 0 \\
    t^3\sigma_0-t^2\sigma_1 & \sigma_{01} & 0 \\
    t^3\sigma_1-t^2\sigma_2 & -t\sigma_{01}+\sigma_{02} & 0 \\
    0 & t^2\sigma_{01}-t\sigma_{02}+\sigma_{12} & 0 \\
    t^3\sigma_{01}-t^2\sigma_{02}+t\sigma_{12} & \sigma_{012} & 0 \\
    -t\sigma_{012} & 0 & \rho_0 \\
    -t^3\sigma_{12} & 0 & \rho_1 \\
    -t^5\sigma_{02} & 0 & \rho_2 \\
    0 & t^4\sigma_{012} & \rho_3-t\rho_2+t^2\rho_1 \\
    -t^9\sigma_{2} & 0 & \rho_4 \\
    0 & t^7\sigma_{12} & \rho_5-t\rho_4 \\
    0 & t^9\sigma_{02} & \rho_6-t^2\rho_4
  \end{array}
\right)
\]

This computation provides us with a module of generators with 6 elements, that injects into the generators $\sigma_0,\dots,\sigma_{012}$ to hit the elements $\sigma_0,\sigma_1,\sigma_2,t^2\sigma_{01}-t\sigma_{02}+\sigma_{12},t^4\sigma_{012},t^7\sigma_{12},t^9\sigma_{02}$. We shall call the generators of the kernel module $\kappa_0,\dots,\kappa_6$ with the images in the order stated here. They come in degrees $0,1,2,5,10,12,13$.

To compute the relations for the kernel module, we setup another matrix for a Gaussian elimination; matrix on the left, result of the elimination on the right:
\[
\left(
  \begin{array}{c|c|c}
    \sigma_0 & \kappa_0 & 0 \\
    \sigma_1 & \kappa_1 & 0 \\
    \sigma_2 & \kappa_2 & 0 \\
    t^2\sigma_{01}-t\sigma_{02}+\sigma_{12} & \kappa_3 & 0 \\
    -t\sigma_{012} & 0 & \rho_0 \\
    -t^3\sigma_{12} & 0 & \rho_1 \\
    -t^5\sigma_{02} & 0 & \rho_2 \\
    -t^7\sigma_{01} & 0 & \rho_3 \\
    t^4\sigma_{012} & \kappa_4 & 0 \\
    -t^9\sigma_{2} & 0 & \rho_4 \\
    -t^{11}\sigma_{1} & 0 & \rho_5 \\
    t^7\sigma_{12} & \kappa_5 & 0 \\
    -t^{13}\sigma_{0} & 0 & \rho_6 \\
    t^9\sigma_{02} & \kappa_6 & 0 \\
  \end{array}
\right)
\qquad
\left(
  \begin{array}{c|c|c}
    \sigma_0 & \kappa_0 & 0 \\
    \sigma_1 & \kappa_1 & 0 \\
    \sigma_2 & \kappa_2 & 0 \\
    t^2\sigma_{01}-t\sigma_{02}+\sigma_{12} & \kappa_3 & 0 \\
    -t\sigma_{012} & 0 & \rho_0 \\
    -t^3\sigma_{12} & 0 & \rho_1 \\
    -t^5\sigma_{02} & 0 & \rho_2 \\
    0 & t^5\kappa_3 & \rho_3-t\rho_2+t^2\rho_1 \\
    0 & \kappa_4 & t^3\rho_0 \\
    0 & t^9\kappa_2 & \rho_4 \\
    0 & t^{11}\kappa_1 & \rho_5 \\
    0 & \kappa_5 & -t^4\rho_1 \\
    0 & t^{13}\kappa_0 & \rho_6 \\
    0 & \kappa_6 & t^4\rho_2 \\
  \end{array}
\right)
\]

Reading off the kernel from the bottom of this reduced matrix gives us the complete presentation of the kernel module as:
\[
\ker\partial = 
\frac{\kappa_0,\dots,\kappa_6}
{t^5\kappa_3,\kappa_4,t^9\kappa_2,t^{11}\kappa_1,\kappa_5,t^{13}\kappa_0,\kappa_6}
\]

We can observe that by sheer luck, this presentation is already on Smith normal form, so no further work is needed. Furthermore, three of the relations -- $\kappa_4,\kappa_5,\kappa_6$ occur in the reduced matrix without any factor $t$, indicating they belong to barcodes of length 0 -- in fact, these are relations inherent in the generators at hand, but that were already in the torsion part of the module when they first materialize. This computation gives us the kernel module of the boundary map -- the persistent cycle module. To compute homology, we now need to compute the cokernel of this module under the boundary map. First, we need to express the homology boundaries in the cycle basis. This can be done with yet another Gaussian reduction; result on the right:
\[
\left(
  \begin{array}{c|c|c}
    \sigma_0 & \kappa_0 & 0 \\
    \sigma_1 & \kappa_1 & 0 \\
    \sigma_2 & \kappa_2 & 0 \\
    t^2\sigma_{01}-t\sigma_{02}+\sigma_{12} & \kappa_3 & 0 \\
    t^4\sigma_{012} & \kappa_4 & 0 \\
    t^7\sigma_{12} & \kappa_5 & 0 \\
    t^9\sigma_{02} & \kappa_6 & 0 \\
    0 & 0 & \sigma_0  \\
    0 & 0 & \sigma_1  \\
    0 & 0 & \sigma_2  \\
    -(t^3\sigma_0-t^2\sigma_1) & 0 & \sigma_{01} \\
    -(t^4\sigma_0-t^2\sigma_2) & 0 & \sigma_{02} \\
    -(t^4\sigma_1-t^3\sigma_2) & 0 & \sigma_{12} \\
    -(t^3\sigma_{01}-t^2\sigma_{02}+t\sigma_{12}) & 0 & \sigma_{012} \\    
  \end{array}
\right)
\qquad
\left(
  \begin{array}{c|c|c}
    \sigma_0 & \kappa_0 & 0 \\
    \sigma_1 & \kappa_1 & 0 \\
    \sigma_2 & \kappa_2 & 0 \\
    t^2\sigma_{01}-t\sigma_{02}+\sigma_{12} & \kappa_3 & 0 \\
    t^4\sigma_{012} & \kappa_4 & 0 \\
    t^7\sigma_{12} & \kappa_5 & 0 \\
    t^9\sigma_{02} & \kappa_6 & 0 \\
    0 & 0 & \sigma_0  \\
    0 & 0 & \sigma_1  \\
    0 & 0 & \sigma_2  \\
    0 & t^3\kappa_0-t^2\kappa_1 & \sigma_{01} \\
    0 & t^4\kappa_0-t^2\kappa_2 & \sigma_{02} \\
    0 & t^4\kappa_1-t^3\kappa_2 & \sigma_{12} \\
    0 & t\kappa_3 & \sigma_{012} \\    
  \end{array}
\right)
\]

To compute the cokernel, these images -- $t^3\kappa_0-t^2\kappa_1, t^4\kappa_0-t^2\kappa_2, t^4\kappa_1-t^3\kappa_2, t\kappa_3$ are included among the relations, and a Smith normal form computed. The matrix for this Smith normal form is (0 replaced with $\cdot$ for clarity); again the computed Smith normal form is on the right:
\[
\begin{pmatrix}
  t^3 & t^2 & \cdot & \cdot & \cdot & \cdot & \cdot \\
  -t^4 & \cdot & t^2 & \cdot & \cdot & \cdot & \cdot \\
  \cdot & -t^4 & -t^3 & \cdot & \cdot & \cdot & \cdot \\
  \cdot & \cdot & \cdot & t & \cdot & \cdot & \cdot \\
  \cdot & \cdot & \cdot & t^5 & \cdot & \cdot & \cdot \\
  \cdot & \cdot & \cdot & \cdot & 1 & \cdot & \cdot \\
  \cdot & \cdot & t^9 & \cdot & \cdot & \cdot & \cdot \\
  \cdot & t^{11} & \cdot & \cdot & \cdot & \cdot & \cdot \\
  \cdot & \cdot & \cdot & \cdot & \cdot & 1 & \cdot \\
  t^{13} & \cdot & \cdot & \cdot & \cdot & \cdot & \cdot \\
  \cdot & \cdot & \cdot & \cdot & \cdot & \cdot & 1 \\
\end{pmatrix}
\qquad
\begin{pmatrix}
  \cdot & t^2 & \cdot & \cdot & \cdot & \cdot & \cdot \\
  \cdot & \cdot & t^2 & \cdot & \cdot & \cdot & \cdot \\
  \cdot & \cdot & \cdot & \cdot & \cdot & \cdot & \cdot \\
  \cdot & \cdot & \cdot & t & \cdot & \cdot & \cdot \\
  \cdot & \cdot & \cdot & \cdot & \cdot & \cdot & \cdot \\
  \cdot & \cdot & \cdot & \cdot & 1 & \cdot & \cdot \\
  \cdot & \cdot & \cdot & \cdot & \cdot & \cdot & \cdot \\
  \cdot & \cdot & \cdot & \cdot & \cdot & \cdot & \cdot \\
  \cdot & \cdot & \cdot & \cdot & \cdot & 1 & \cdot \\
  t^{13} & \cdot & \cdot & \cdot & \cdot & \cdot & \cdot \\
  \cdot & \cdot & \cdot & \cdot & \cdot & \cdot & 1 \\
\end{pmatrix}
\]

From this normal form we can read off the final barcode for the homology module:
\[
\begin{array}{r|l}
  \text{\bf Dim.} & \text{\bf Interval} \\ \hline
  0 & (0,13) \\
  0 & (1,3) \\
  0 & (2,4) \\
  1 & (5,6) \\
\end{array}
\]

\end{document}